\newtheorem{theorem}{Theorem}
\newtheorem{corollary}{Corollary}
\newtheorem{definition}{Definition}
\newtheorem{lemma}{Lemma}
\newtheorem{proposition}{Proposition}
\newenvironment{proof}[1][Proof]{\noindent\textbf{#1.} }{\ \rule{0.5em}{0.5em}}
\let\OLDthebibliography\thebibliography
\renewcommand\thebibliography[1]{
  \OLDthebibliography{#1}
  \setlength{\parskip}{0pt}
  \setlength{\itemsep}{0pt plus 0.3ex}
}
\begin{document}

\title{\textbf{Robust Equilibria in General Competing Mechanism Games\thanks{%
First version: April 13, 2021}}}
\author{Seungjin Han\thanks{%
This version subsumes and expands earlier versions. It also corrects errors
in the discussion in earlier versions on the equilibrium characterization in
standard competing mechanism games. I thank Siyang Xiong for his comments.
Alex Sam provided excellent research assistance. I thankfully acknowledge
financial support from the Social Sciences and Humanities Research Council
of Canada. Address: Department of Economics, McMaster University, 1280 Main
Street West, Hamilton, ON, CANADA. Email: hansj@mcmaster.ca.} \\
McMaster University}
\date{August 14, 2023}
\maketitle

\begin{abstract}
This paper proposes the notion of robust PBE in a general competing
mechanism game of incomplete information where a mechanism allows its
designer to send a message to himself at the same time agents send messages.
It identifies the utility environments where the notion of robust PBE
coincides with that of strongly robust PBE (Epstein and Peters (1999), Han
(2007)) and with that of robust PBE respectively. If each agent's utility
function is additively separable with respect to principals' actions, it is
possible to provide the full characterization of equilibrium allocations
under the notion of robust PBE and its variations, in terms of Bayesian
incentive compatible (BIC) direct mechanisms, without reference to the set
of arbitrary general mechanisms allowed in the game. However, in the
standard competing mechanism agme, the adoption of robust PBE as the
solution concept does not lead to the full characterization of equilibrium
allocations in terms of BIC direct mechanisms even with agents' separable
utility functions.
\end{abstract}

\section{Introduction}

The Revelation Principle does not hold when multiple principals compete in
the market (Epstein and Peters (1999), McAfee (1992)). When a principal
(e.g., a seller) designs a mechanism, he is aware that agents (e.g., buyers)
have not only private information on their payoff types but also market
information on mechanisms offered by competing principals, their beliefs on
the terms of trade determined in those mechanisms, etc. Therefore, it is
generally not enough to ask agents to report their payoff types only.

The literature on competing mechanisms or contractible contracts has
explored the big idea that principals can use mechanisms or contractible
contracts that involve commitments to punish a competing principal's
deviation. Punishment can take place when agents' messages reveal a
competing principal's deviation (Epstein and Peters (1999), Yamashita
(2010), Xiong (2013)), or it is prescribed in a contractible contract,
directly conditional on a deviator's publicly observable contract (Szentes
(2015)). This makes it possible for principals to sustain various collusive
outcomes.

Epstein and Peters (1999) propose a class of universal mechanisms that allow
agents to describe competing principals' mechanisms. However, the message
spaces in these universal mechanisms that agents use in describing competing
principals' mechanisms are quite complex due to the infinite regress
problem: my mechanism depends on the competitor's mechanism, his mechanism
depends on mine, and so on ad infinitum. A major contribution by Yamashita
(2010) is that any equilibrium allocations can be understood as if a
principal offers a recommendation mechanism that delegates a choice of his
incentive compatible direct mechanism to agents. The incentive compatible
direct mechanism chosen by a majority of agents is implemented.
Recommendation mechanisms can be thought of as an extension of menus
proposed for common agency (Martimort and Stole (2002), Page and Monterio
(2003), and Peters (2001)).

On the other hand, Szentes (2015) studies a bilateral contracting game as in
Han (2006), but augmented by contractible contracts (Peters and Szentes
(2012)) in which a principal's ordinary contract is directly conditional on
a deviator's ordinary contract described by the G\"{o}del language.

\subsection{Motivation}

One difficulty associated with the equilibrium analysis in the competing
mechanism games is that the characterization of the equilibrium allocations
in the competing mechanism games is tied to the very set of arbitrary
mechanisms allowed for each principal in the game. Therefore, the set of
equilibrium allocations is generally not expressed in terms of incentive
compatible direct mechanisms. Furthermore, it is not clear whether a
principal wants to deivate to a mechanism that is not allowed in the game.

Another difficulty arises from the adoption of the modeling convention from
the mechanism design in a model for a single principal. At least, it is
conceptually feasible that a principal offer a set-valued mechanism in which
agents' messages only specify a subset of actions from which the principal
subsequently chooses any action he wants. However, the literature focuses on
a standard single-valued mechanism in which agents' messages fully determine
the principal's action even though a principal, in principle, may offer a
set-valued mechanism. This is because any equilibrium allocation given a
set-valued mechanism can be reproduced by a corresponding incentive
compatible direct mechanism, thanks to the Revelation Principle. However,
Szentes (2010) points out that a principal may want to use a set-valued
mechanism in the competing mechanism games of complete information, in order
to keep the freedom of choosing his action at a later stage of the game.

This paper studies how to tackle these difficulties in the competing
mechanism games of incomplete information where a principal's general
mechanism allows him to send a message to himself at the same time agents
send messages and these messages altogether determine an action. A
principal's set-valued mechanism that assigns a subset of actions
conditional on agents' messages is strategically equivalent to a general
mechanism. Therefore, the set of set-valued mechanisms can be nested into
the set of general mechanisms.

\subsection{Main Results}

This paper identifies the utility environment and various notions of robust
equilibrium under which the set of equilibrium allocations is characterized
by incentive compatible direct mechanisms in the competing mechanism game
where a principal can offer any general mechanism.

We first propose the notion of \emph{robust} \emph{perfect Bayesian
equilibrium} (robust PBE) in which any principal $j$ cannot gain regardless
of a profile of agents' continuation equilibrium strategies of communicating
with $j$ upon his deviation to any general mechanism, taking as given a
profile of their continuation equilibrium strategies of communicating with
non-deviating principals.

The notion of robust PBE is weaker than the notion of strongly robust PBE
(Epstein and Peters (1999), Han (2007)) in which a principal cannot gain in
all continuation equilibria agents may play upon his deviation. However, it
is stronger than the notion of PBE because an allocation can be supported in
a PBE as long as there exists a continuation equilibrium in which a
principal cannot gain upon deviation. I first identify the utility
environments where the notion of robust PBE coincides with that of strongly
robust PBE and with that of PBE respectively. If each agent's utility
function is additively separable with respect to principals' actions and
each principal's utility function satisfies the \textquotedblleft no action
externalities\textquotedblright\ condition\footnote{%
The \textquotedblleft no action externalities\textquotedblright\ condition
is satisfied if each principal $j$ is indifferent between any action
profiles that the other principals may choose, regardless of his own action
and the agents' types.}, then the notion of robust PBE coincides with that
of strongly robust PBE. The no action externalities condition for each
principal's utility function is not sufficient for the equivalence between
robust PBE and strongly robust PBE. The reason is that the set of
continuation equilibrium strategies of agents' communication with principal $%
j$ may depend on continuation strategies of their communication with the
other principals. The separability condition for each agent's utility
function breaks down such dependence.

If each principal's utility function satisfies the \textquotedblleft pure
externalities\textquotedblright\ condition, then the notion of robust PBE
coincides with that of PBE. The separability condition for the agent's
utility function is not required for the equivalence between robust PBE and
PBE.\footnote{%
The pure externalities condition implies that each principal is indifferent
between any pair of his own actions, regardless of the other principals'
actions and agents' types. Even though each principal is indifferent between
any pair of his own actions, he still faces a strategic choice of his
mechanism because his mechanism affects agents' strategies of communicating
with the other principals, which eventually affects his utility through the
other principals' action choices in their mechanisms.}

One interesting variant of robust PBE is that each agent's strategy of
communicating with a non-deviating principal does not depend on the other
principals' mechanisms. That is, her communication with a non-deviating
principal is invariant to market information. If a PBE is robust and
invariant to market information, I call it a \emph{market-information
invariant robust PBE}.

I also formulate a notion of \emph{uniquely implementable robust PBE}. A
robust PBE is uniquely implementable if any profile of agents' continuation
equilibrium strategies of communicating with a non-deviating principal
induces the same BIC direct mechanism on the path or off the path following
a competing principal's deviation respectively. If a robust PBE is uniquely
implementable, the notion of robust PBE coincides with that of strongly
robust PBE. Once players reach a uniquely implementable robust PBE, not only
is there no continuation equilibrium where a principal can gain off the
equilibrium path following his deviation, but every possible continuation
equilibrium on the equilibrium path induces the same equilibrium allocation.

I show that if each agent's utility function is additively separable with
respect to principals' actions, it is possible to provides the full
characterization of equilibrium allocations under various notions of robust
PBE in terms of Bayesian incentive compatible (BIC) direct mechanisms,
without reference to the set of arbitrary mechanisms allowed in the game.

Given each agent's separable utility function, I can define an individually
Bayesian incentive compatible (BIC) direct mechanism for each principal $j$.
Then, a BIC profile of principals' direct mechanisms is a profile of
principals' individually BIC direct mechanisms.

I then consider the competing BIC direct mechanism game where a principal is
restricted to offer only a BIC direct mechanism. Not only does a
(truth-telling) PBE in the competing BIC direct mechanism game exist but it
is also robust. Furthermore, the set of market-information invariant robust
PBE allocations in any general competing mechanism game is the same as the
set of a (truth-telling) PBE allocation, denoted by $\Pi _{B}^{\ast }$, in
the competing BIC direct mechanism game. Therefore, any (truth-telling) PBE
in the competing BIC direct mechanism game is free from Szentes' critique.

For the characterization of uniquely implementable robust PBE allocations, I
extend the result from Bayesian implementation (Palfrey an Srivastava
(2018)). That is, if a profile of direct mechanisms $\pi ^{\ast }=\left( \pi
_{1}^{\ast },\ldots ,\pi _{J}^{\ast }\right) $ is uniquely implementable in
a robust PBE, then (i) $\pi ^{\ast }$ is a (truth-telling) PBE allocation in
the competing BIC direct mechanism game (i.e., $\pi ^{\ast }\in \Pi
_{B}^{\ast }$) and (ii) each $\pi _{j}^{\ast }$ satisfies Bayesian
monotonicity.\footnote{%
Condition (i) in a single-principal case is just the incentive compatibility
of the direct mechanism. See Definition 3.1 for Bayesian monotonicity in
Palfrey and Srivastava (2018).} Let $\Pi _{UB}^{\ast }$ be the set of
profiles BIC direct mechanisms that satisfies (i) and (ii). I can show $\Pi
_{UB}^{\ast }$ is indeed the set of uniquely implementable robust PBE
allocations in a general equilibrium when well-established additional
conditions on the information structure on the type space are satisfied (see
Palfrey an Srivastava (2018)).

Generally, the set of robust PBE allocations is big. Specifically, it is the
set of all profiles of BIC direct mechanisms that induces each principal $j$%
's utility no less than his \emph{minmax} utility value over BIC direct
mechanisms (Let's denote it by $\Pi _{RB}^{\ast }$). Let us explain the key
idea. In a general competing mechanism game, principal $j$ can offer a
mechanism where the message space for himself is the set that includes all
of his BIC direct mechanisms $\Pi _{j}^{B}$ and the message space for each
agent is her type space. In this mechanism, principal $j$ chooses his BIC
direct mechanism $\pi _{j}$ from $\Pi _{j}^{B}$ and agents send type
messages to $j$ at the same time and subsequently $\pi _{j}$ determines $j$%
's action conditional on agents' messages. The notion of robust PBE makes it
possible for $j$ to implement any BIC direct mechanism $\pi _{j}$ from $\Pi
_{j}^{B}$. Given $j$'s belief on BIC direct mechanisms induced by
non-deviators' mechanisms, principal $j$ can then choose his BIC direct
mechanism $\pi _{j}$ from $\Pi _{j}^{B}$ that maximizes his utility.
Therefore, principal $j$'s utility in a robust PBE of a general competing
mechanism game cannot be lower than his minmax values over individually BIC
direct mechanisms.

This paper then shows that any profile of BIC direct mechanisms that
generates a utility for each principal no less than his minmax value can be
supported in a robust PBE where each principal $j$ offers a
deviator-reporting mechanism which asks agents to report only the identity
of a deviating principal along with their payoff types. If a majority of
agents report $j$, then deviator-reporting mechanisms assign individually
BIC direct mechanisms that lead to $j$'s minmax value: Otherwise, they
assign individually BIC direct mechanisms that are supposed to be
implemented in an equilibrium.

Note that we have that 
\begin{equation*}
\Pi _{UB}^{\ast }\subset \Pi _{B}^{\ast }\subset \Pi _{RB}^{\ast }.
\end{equation*}%
All three sets are independent of the set of general mechanisms allowed in
competing mechanism games, and they are all specified in terms of BIC direct
mechanisms. One may view my paper as one that identifies the notions of PBEs
that produce those sets of allocations as the sets of equilibrium
allocations in a general competing mechanism game respectively.

As clearly pointed out in Section \ref{sec:discussion}, in the standard
competing mechanism game, the adoption of robust PBE as the solution concept
does not lead to the full characterization of equilibrium allocation in
terms of BIC direct mechanisms even with the separable utility functions for
the agents. This is because generally we do not know the set of principal $j$%
's BIC direct mechanisms that can be induced by all possible continuation
equilibrium strategies of communicating with principal $j$ given any
arbitrary standard mechanism in which agents' messages fully determines $j$%
's action.

Such a difficulty does not prevent the full characterization of robust PBE
allocations in general competing mechanism games. The reason is that
principal $j$ can alway offer a general mechanism in which the message space
for himself is the set of all possible BIC direct mechanisms $\Pi _{j}^{B}$
and the message space for each agent is her type space. Given the separable
utility functions for agents, the set of all possible BIC direct mechanisms
for $j$ induced by all continuation-equilibrium communication strategies is
the equal to $\Pi _{j}^{B}$. This makes that principal $j$'s robust PBE
utility cannot be lower than his minmax utility value over BIC direct
mechanisms. I then show that in the general competing mechanism game, any
profile of BIC direct mechanisms that provides each principal $j$ with a
utility no less than his minmax utility value over BIC direct mechanisms can
be supported in a robust PBE where each principal $j$ offers a
deviator-reporting mechanism.

Section \ref{sec:literature} provides the related literature review. Section %
\ref{sec:model} sets out the model for general competing mechanism games.
Section \ref{sec_equilibrium_notion} formalizes the three notions of robust
PBE. In Section \ref{sec:competing_BIC}, I characterize the set of
market-information invariant PBE allocations and the set of uniquely
implementable robust PBE allocations. In Section \ref{sec: robust_PBE}, I
characterize the set of robust PBE allocations. Section \ref{sec:discussion}
concludes the paper.

\subsection{Related literature\label{sec:literature}}

While this paper imposes the separability on the agent's utility function,
it allows for interdependent values of payoff types, an agent's utility
depends on actions chosen by all principals, and there are no restrictions
on the principal's utility function. In my companion paper (Han (2022)), I
considered a particular environment where (i) an agent can choose only one
principal for trading and her utility depends on only the action of the
principal she chooses and her type (i.e. private value) and (ii) utility
functions for all players are linear and there are also no externalities on
the principal side. Han (2022) focuses on the specific (but tractable)
competition model where a principal's general contract chooses his
dominant-strategy incentive compatible direct mechanism as a function of
messages sent by all agents before they choose their trading principals;
each agent subsequently sends her true type to the principal she chooses.%
\footnote{%
Extending the DIC-BIC equivalence in Gershkov, et al. (2013), Han (2022)
shows that when all the other principals offer such general contracts, a
principal has no incentive to deviate to any arbitrary general contract in
the sense that there exists a continuation equilibrium that punishes a
principal upon his deviation to an arbitrary general contract.} The focus in
Han (2022) is to characterize the equilibrium allocations conditional on
truthful type reporting to the trading principal in the specific competition
model, show that those equilibria are free from Szentes' critique, and
provide comparative statics on equilibrium profits and prices in
applications.

In common agency (i.e., multiple principals and a single agent), Han (2012),
Peters (2003, 2007), and Attar et al. (2008) identify the restrictions on
the players' preference ordering with which there is no loss of generality
for principals to focus on take-it-or-leave-it offers or
incentive-compatible direct mechanisms. The \textquotedblleft no
externalities\textquotedblright\ condition imposed in Peters (2003, 2007)
focused on games where an agent's payoff type does not affect her preference
ordering and it does not allow that a principal's utility depends on
competing principals' actions. A separability on the agent's preference
ordering is imposed in Attar et al. (2008) for common agency.

I propose similar restrictions on players' preference ordering as theirs for
the equivalence between the notion of robust PBE and that of strongly robust
PBE. However, the \textquotedblleft no action
externalities\textquotedblright\ condition in my paper does not require that
an agent's payoff type do not affect her preference ordering, which is
required in the \textquotedblleft no externalities\textquotedblright\
condition (Peters (2003, 2007)). The \textquotedblleft
separability\textquotedblright\ condition in my paper also does not impose
the no indifference condition in Attaer, et al. (2008) that an agent is
never indifferent between any pair of actions from any principals.\footnote{%
I believe that for common agency as in Attar, et al. (2008), only the
separability condition is needed for the Revelation Principle for robust
equilibrium formulated in our paper without the no indifference condition.}
In addition, the monotonicity condition in Han (2012) is not required.

In contrast to these papers, I formulate the notion of robust equilibrium
such that it is sufficient to impose only the separability on agents'
preference ordering for the characterization of the set of all robust
equilibrium allocations in terms of BIC direct mechanisms in general
competing mechanism games with multiple principals and multiple agents.

It is worthwhile to mention that mechanism design approaches considered in
Szentes (2010) and this paper are different from contracting with limited
commitment considered by Bester and Strausz (2000, 2010) and Dorval and
Skreta (2018) in a model with a single principal. This literature considers
the situations where part of a principal's action is not contractable.
Because of that, the standard Revelation Principle does not hold even with a
single principal (Bester and Stausz (2000)).

I still assume the full commitment in this paper. In fact, the set of
standard mechanisms is a strict subset of the set of general mechanisms. As
pointed out in footnote \ref{foot_RP}, when there is only one principal in a
model, the standard Revelation Principle holds for the class of mechanisms
considered in this paper.

\section{General Competing Mechanisms\label{sec:model}}

Throughout the paper, I assume that a set is a compact metric space. When a
measurable structure is necessary, the corresponding Borel $\sigma $ algebra
is used. For a set $Z$, $\Delta \left( Z\right) $ denotes the set of Borel
probability distributions on $X$. For any mapping $L$ from $S$ into $Q$, $%
L(S)$ denotes the image set of $L$.

There are $J$ principals and $I$ agents. Let $\mathcal{J}=\{1,\ldots ,J\}$
be the collection of principals and $\mathcal{I}=\{1,\ldots ,I\}$ the the
collection of agents with $J\geq 2$ and $I\geq 3.$ Agent $i$'s payoff type
is denoted by $x_{i}\in X_{i}.$ Let $x=(x_{1},\ldots ,x_{I})\in
X=X_{1}\times \cdots \times X_{I}$ denote a profile of types, which is drawn
from a probability distribution $F\in \Delta (X).$ Let $A_{j}^{\circ }$ be
the set of underlying actions that principal $j$ can take and $A_{j}:=\Delta
(A_{j}^{\circ }).$ Let $a=(a_{1},\ldots ,a_{I})\in A=A_{1}\times \cdots
\times A_{J}$ denote a profile of actions. For all $a\in A$ and all $x\in X,$
principal $j$'s (von Neumann Morgenstern expected) utility is $v_{j}(a,x)$
and agent $i$'s (von Neumann Morgenstern expected) utility is $u_{i}(a,x)$.

Each players's preference relation satisfies the von Neumann-Morgenstern
expected utility property. Unless otherwise specified, I assume that each
agent's utility function is \emph{additively separable} with respect to
principals' actions. That is, for all $a\in A$ and all $x\in X,$ agent $i$'s
utility takes the form of 
\begin{equation}
u_{i}(a,x)=\sum_{k=1}^{J}u_{ik}(a_{k},x).  \label{agent_payoff}
\end{equation}

Notably, (\ref{agent_payoff}) allows a quasilinear representation of an
agent's preference relation. For example, let $a_{j}=(y_{j},t_{j}),$ where $%
y_{j}$ is a non-monetary (multidimensional) component of principal $j$'s
action and $t_{j}=(t_{0j},t_{1j},\ldots ,t_{Ij})$ is a profile of monetary
transfers with $t_{0j}$ for principal $j$ himself and $t_{ij}$ for agent $i$
for all $i\in \mathcal{I}$. Then, (\ref{agent_payoff}) allows us to express
agent $i$'s utility as $u_{i}(a,x)=\sum_{k=1}^{J}h_{ik}(y_{k},x)+T_{i}$ with 
$T_{i}=\sum_{k=1}^{J}t_{ik}+m_{i}$ for agent $i$'s initial amount of money $%
m_{i}\in 
\mathbb{R}
_{+}$.

While (\ref{agent_payoff}) imposes the separability on the agent's utility
function, it is quite general in terms of the nature of actions, how
principals' actions affect a player's utility, and how a player's utility
depends on all agents' types. The equivalence between dominant-strategy
implementation and Bayesian implementation in Gershkov, et al. (2013) is
based on the restrictions that (i) agents' utility functions are linear,
(ii) each agent's type has only \emph{private value}, and (iii) a
principal's action is $a=(n,m)$, where an alternative $n$ belongs to a
discrete and finite set $\{1,\ldots ,N\}$ and $m$ is a vector of monetary
transfers, one for each agent. Given that the agent's utility environment in
my paper allows for \emph{interdependent values} and generality of a
principal's action, working with dominant strategy incentive compatible
direct mechanisms for the characterization of equilibrium allocations loses
its generality even with linear utility functions and monetary transfers.%
\footnote{%
One may consider ex-post incentive compatible direct mechanisms for the case
of interdependent types. However they are generally not equivalent to
Bayesian incentive compatible direct mechanisms as shown in Gershkov, et al
(2013).}

A major departure from the literature on competing mechanisms is to allow a
principal to send a message to himself given his mechanism. Specifically,
each principal $j$ can offer a \emph{general} \emph{mechanism} that
determines his action contingent on agents' messages and his own message.
Let $M_{ij}$ be the set of messages available for agent $i\in \mathcal{I}.$
Let $M_{0j}$ be the set of messages available for principal $j$ himself. Let 
$M_{j}=\times _{k=0}^{I}M_{kj}.$ A mechanism is then a measurable mapping 
\begin{equation*}
\gamma _{j}:M_{j}\rightarrow A_{j}.
\end{equation*}%
Let $\Gamma _{j}$ be the set of mechanisms available for principal $j$. Let $%
\Gamma =\Gamma _{1}\times \cdots \times \Gamma _{J}.$

The class of mechanisms proposed in this paper is more general than a
standard mechanism that specifies a principal's action contingent on agents'
messages only. In terms of our formulation, a standard mechanism is the same
as a mechanism $\gamma _{j}:M_{j}\rightarrow A_{j}$ such that for all $%
m_{0j},m_{0j}^{\prime }\in M_{0j}$ and all $(m_{1j},\ldots ,m_{Ij})\in
\times _{k\neq 0}M_{kj}$,%
\begin{equation}
\gamma _{j}(m_{0j},m_{1j},\ldots ,m_{Ij})=\gamma _{j}(m_{0j}^{\prime
},m_{1j},\ldots ,m_{Ij}).  \label{restriction1}
\end{equation}%
Let $\overline{\Gamma }_{j}$ be the set of mechanisms satisfying (\ref%
{restriction1}) for each $j\in \mathcal{J}$. Clearly $\overline{\Gamma }_{j}$
is a strict subset of $\Gamma _{j}$. Standard competing mechanism games
(e.g., Epstein and Peters (1999), Han (2006), Yamashita (2010) among many)
allow principal $j$ to offer only a mechanism in $\overline{\Gamma }_{j}$.
For notational simplicity, I drop the message set $M_{0j}$ for any mechanism
in $\overline{\Gamma }_{j}$ and denote a mechanism in $\overline{\Gamma }%
_{j} $ by a measurable mapping from $M_{1j}\times \cdots \times M_{Ij}$ into 
$A_{j}.$

Mechanisms in $\Gamma $ or $\overline{\Gamma }$ are quite general in terms
of the degree and nature of communication that they allow. For example, they
are sufficiently general that there exists an embedding $\eta _{j}$ from the
set of individually BIC direct mechanisms for principal $j$ defined in
Section 3 into $\overline{\Gamma }_{j}$. $\overline{\Gamma }$ is also large
enough to include mechanisms homeomorphic to deviator-reporting mechanisms
and deviator's DM (direct mechanism) reporting mechanisms defined in
Sections 4 and 5.

For a contract in Szentes (2010), there exists a strategically equivalent
mechanism in $\Gamma _{j}$. Principal $j$'s contract proposed by Szentes
(2010) is a measurable mapping $h_{j}:M_{1j}\times \cdots \times
M_{Ij}\rightarrow \mathcal{P}(A_{j})$ in our environment, where $\mathcal{P}%
(A_{j})$ is the set of all subsets of $A_{j}.$ In this contract, a profile
of agents' messages $(m_{1j},\ldots ,m_{Ij})$ first determines $%
h_{j}(m_{1j},\ldots ,m_{Ij}),$ a subset of $A_{j},$ from which principal $j$
subsequently chooses his action. For such a contract, I can construct a
strategically equivalent mechanism $\gamma _{j}:M_{j}\rightarrow A_{j}$. Let 
$m_{0j}$ be a measurable mapping from $h_{j}(M_{1j}\times \cdots \times
M_{Ij})$ (i.e., the image set of $h_{j}$) into $A_{j}$ with the property
that for all $h_{j}(m_{1j},\ldots ,m_{Ij})\in h_{j}(M_{1j}\times \cdots
\times M_{Ij})$, $m_{0j}(h_{j}(m_{1j},\ldots ,m_{Ij}))\in
h_{j}(m_{1j},\ldots ,m_{Ij}).$ Let $M_{0j}$ be the set of all such
measurable mappings. Let us define a mechanism $\gamma _{j}:M_{j}\rightarrow
A_{j}$ such that for a general contract $h_{j}$ and all $(m_{0j},m_{1j},%
\ldots ,m_{Ij})\in M_{j}$%
\begin{equation*}
\gamma _{j}(m_{0j},m_{1j},\ldots ,m_{Ij})=m_{0j}\left( h_{j}\left(
m_{1j},\ldots ,m_{Ij}\right) \right) .
\end{equation*}%
In a contract $h_{j},$ agents' messages determine a subset of actions and
then principal $j$ chooses an action from the subset. In the corresponding
mechanism $\gamma _{j}$, principal $j$ reports his contingency plan $m_{0j}$
to himself, at the same time agents send messages to him, as to what action
to choose as a function of the subset of actions determined by agents'
messages. Therefore, the set of contracts in Szentes (2010) can be nested
into $\Gamma $.

A general competing mechanism game with $\Gamma $ starts when principals
simultaneously offer their mechanisms from $\Gamma .$ Given a profile of
mechanisms $\gamma =(\gamma _{1},\ldots ,\gamma _{J})$, each agent sends a
message to every principal and every principal sends a message only to
himself at the same time. When principals and agents send messages, each
principal $j$'s mechanism is only observed by agents but not by the other
principals. According to each principal $j$'s mechanism $\gamma _{j}$, a
profile of messages $m_{j}=(m_{0j},m_{1j},\ldots ,m_{Ij})$ determines his
random action as $\gamma _{j}(m_{j})\in A_{j}$. Finally, utilities are
realized.

One may consider an alternative game where a principal inputs his message
after observing the other principals' mechanisms. In terms of characterizing
equilibrium allocations, it does not matter whether or not the other
principals' mechanisms are observable by a principal when he inputs his
message in his mechanism given three or more agents. The reason is that the
message spaces can be sufficiently general that (i) agents are able to
describe the other principals' mechanisms with their messages and (ii) a
principal can input a message in his mechanism that describes a contingency
plan with respect to all possible messages that agents could send, as
explained earlier.

\section{Notion of Equilibrium\label{sec_equilibrium_notion}}

I consider a pure-strategy equilibrium, as in Yamashita (2010), in the sense
that principals use pure strategies for their mechanism choices. In this
paper, mixed communication strategies are however allowed to such an extent
that the existence of an equilibrium can be addressed, whereas Yamashita
(2010) considers only pure communication strategies.

Let $c_{ij}:\Gamma \times X_{i}\rightarrow \Delta (M_{ij})$ denote agent $i$%
's mixed strategy of communicating with principal $j$. Let $c_{0j}:\Gamma
_{j}\rightarrow \Delta (M_{0j})$ be principal $j$'s mixed strategy of
communicating with himself. For all $\gamma \in \Gamma $ and all $x\in X$,
let 
\begin{multline*}
g_{j}(\gamma _{j},c_{0j}(\gamma _{j}),c_{1j}\left( \gamma ,x_{1}\right)
,\ldots ,c_{Ij}\left( \gamma ,x_{I}\right) ) \\
:=\int_{M_{0j}}\int_{M_{1j}}\cdots \int_{M_{Ij}}\gamma
_{j}(m_{0j},m_{1j},\ldots ,m_{Ij})dc_{0j}(\gamma _{j})dc_{1j}(\gamma
,x_{1})\cdots dc_{Ij}(\gamma ,x_{I}).
\end{multline*}%
Given $\gamma =(\gamma _{1},\ldots ,\gamma _{J})\in \Gamma ,$ $g_{j}(\gamma
_{j},c_{0j}(\gamma _{j}),c_{1j}\left( \gamma ,x_{1}\right) ,\ldots
,c_{Ij}\left( \gamma ,x_{I}\right) )\in A_{j}$ is the probability
distribution over $A_{j}$ conditional on $x=(x_{1},\ldots ,x_{I})$ that is
induced by (i) the strategies of communicating with principal $j$, $%
c_{j}(\gamma )=(c_{0j}(\gamma _{j}),c_{1j}\left( \gamma ,\cdot \right)
,\ldots ,c_{Ij}\left( \gamma ,\cdot \right) )$ and (ii) principal $j$'s
mechanism $\gamma _{j}$. Then, 
\begin{equation}
g_{j}(\gamma _{j},c\left( \gamma \right) )=g_{j}(\gamma _{j},c_{0j}(\gamma
_{j}),c_{1j}\left( \gamma ,\cdot \right) ,\ldots ,c_{Ij}\left( \gamma ,\cdot
\right) )  \label{direct_mech}
\end{equation}%
is a mapping from $X\rightarrow A_{j}$, which is a \emph{direct mechanism}
induced by the communication strategies $c_{j}(\gamma )$ and principal $j$'s
mechanism $\gamma _{j}$.\footnote{%
Suppose that there is only one principal in a model. Given a mechanism $%
\gamma _{1}$ offered by principal 1 (i.e., single principal), a profile of
continuation-equilibrium communication strategies $c_{1}\left( \gamma
_{1}\right) $ induces a Bayesian incentive compatible direct mechanism $%
g_{1} $ according to (\ref{direct_mech}) in which only agents send type
messages and the principal does not send a message to himself. Therefore,
the standard Revelation Principle holds. This implies that in a model of a
single principal, the class of mechanisms considered in Szentes (2010) and
my paper is no different from the class of standard mechanisms in terms of
the set of equilibrium allocations that it generates.\label{foot_RP}} Let us
denote principal $j$'s direct mechanism by $\pi _{j}:X\rightarrow A_{j}$.
Let $\Pi _{j}$ be the set of all possible (measurable) mappings $\pi
_{j}:X\rightarrow A_{j}.$

For $\gamma \in \Gamma ,$ agent $i$'s (ex-ante) expected utility is
expressed as%
\begin{equation*}
\mathbb{E}_{x}\left[ u_{i}(g_{1}(\gamma _{1},c_{1}(\gamma ,x)),\ldots
,g_{J}(\gamma _{J},c_{J}(\gamma ,x)),x)\right] =\sum_{k=1}^{J}\mathbb{E}_{x}%
\left[ u_{ik}(g_{k}(\gamma _{k},c_{k}\left( \gamma ,x\right) ),x)\right] ,
\end{equation*}%
where $\mathbb{E}_{x}\left[ \cdot \right] \mathbb{\ }$the expectation
operator based on the probability distribution $F$ over $X$ for all $k\in 
\mathcal{J}$. For $\gamma \in \Gamma $, principal $j$'s (ex-ante) expected
utility is denoted by 
\begin{equation*}
\mathbb{E}_{x}\left[ v_{j}(g_{1}(\gamma _{1},c_{1}(\gamma ,x)),\ldots
,g_{J}(\gamma _{J},c_{J}(\gamma ,x)),x)\right] .
\end{equation*}

\begin{definition}
\label{def_cont_eq}For all $\gamma \in \Gamma $, $c\left( \gamma \right)
=\left( c_{1}(\gamma ),\ldots ,c_{J}(\gamma )\right) $ is said to be a \emph{%
continuation equilibrium} given $\gamma $ if

\begin{enumerate}
\item[(i)] for all $i\in \mathcal{I}$ and all her communication strategies $%
(c_{i1}^{\prime }(\gamma ,\cdot ),\ldots ,c_{iJ}^{\prime }(\gamma ,\cdot ))$ 
\begin{multline}
\mathbb{E}_{x}\left[ u_{i}(\left[ g_{k}(\gamma _{k},c_{0k}(\gamma
_{k}),c_{1k}\left( \gamma ,x_{1}\right) ,\ldots ,c_{Ik}\left( \gamma
,x_{I}\right) )\right] _{k=1}^{J},x)\right] \geq \\
\mathbb{E}_{x}\left[ u_{i}(\left[ g_{k}(\gamma _{k},c_{0k}(\gamma
_{k}),c_{ik}^{\prime }\left( \gamma ,x_{i}\right) ,c_{-ik}\left( \gamma
,x_{-i}\right) )\right] _{k=1}^{J},x)\right] ,  \notag
\end{multline}%
where $c_{-ik}\left( \gamma ,x_{-i}\right) =\left( c_{\ell k}\left( \gamma
,x_{-\ell }\right) \right) _{\ell \neq i}$ and

\item[(ii)] for all $j\in \mathcal{J}$ and all his communication strategies $%
c_{0j}^{\prime }(\gamma )$,%
\begin{multline}
\mathbb{E}_{x}\left[ v_{j}(g_{j}(\gamma _{j},c_{j}(\gamma ,x)),g_{-j}(\gamma
_{-j},c_{-j}(\gamma ,x)),x)\right] \geq \\
\mathbb{E}_{x}\left[ v_{j}(g_{j}(\gamma _{j},c_{0j}^{\prime }(\gamma
),c_{-0,j}(\gamma ,x)),g_{-j}(\gamma _{-j},c_{-j}(\gamma ,x)),x)\right] ,
\label{principal_eq_payoff}
\end{multline}%
where $c_{-0,j}(\gamma ,x)=(c_{ij}(\gamma ,x_{i}))_{i\in \mathcal{I}}$ and $%
g_{-j}(\gamma _{-j},c_{-j}(\gamma ,x))=(g_{k}(\gamma _{k},c_{k}(\gamma
,x)))_{k\neq j}$.
\end{enumerate}
\end{definition}

Definition \ref{def_cont_eq} is based on ex-ante expected utilities. When
the agent's utility function is additively separable with respect to
principals' actions, condition (i) in Definition \ref{def_cont_eq} can be
equivalently formulated with agents' interim expected utilities as follows:
For all $i\in \mathcal{I}$, a.e. $x_{i}\in X_{i}$ and all $(\mu _{i1},\ldots
,\mu _{iJ})\in \Delta (M_{i1})\times \cdots \times \Delta (M_{iJ}),$%
\begin{multline}
\sum_{k=1}^{J}\mathbb{E}_{x_{-i}}\left[ u_{ik}(g_{k}(\gamma
_{k},c_{0k}(\gamma _{k}),c_{1k}\left( \gamma ,x_{1}\right) ,\ldots
,c_{Ik}\left( \gamma ,x_{I}\right) ),x_{-i},x_{i})|x_{i}\right] \geq \\
\sum_{k=1}^{J}\mathbb{E}_{x_{-i}}\left[ u_{ik}(g_{k}(\gamma
_{k},c_{0k}(\gamma _{k}),\mu _{ik},c_{-ik}\left( \gamma ,x_{-i}\right)
),x_{-i},x_{i})|x_{i}\right] ,  \label{agent_eq_payoff}
\end{multline}%
where $\mathbb{E}_{x_{-i}}\left[ \cdot |x_{i}\right] $ is the expectation
operator based on the probability distribution over $X_{-i}$ conditional on $%
x_{i}\in X_{i}$. Let $\mathcal{C}(\gamma )$ be the set of all continuation
equilibria given $\gamma .$ For all $\gamma \in \Gamma $, let $\mathcal{C}%
_{j}(\gamma )$ be the projection of $\mathcal{C}(\gamma )$ onto the space of
strategies of communicating with principal $j.$ For all $\gamma \in \Gamma $%
, let $\mathcal{C}_{-j}(\gamma )$ be the projection of $\mathcal{C}(\gamma )$
onto the space of strategies of communicating with principals except for $j.$

\begin{definition}
\label{def_eq}$(\gamma ^{\ast },c^{\ast })$ is a \emph{pure-strategy perfect
Bayesian equilibrium} (henceforth simply PBE) of a general competing
mechanism game with $\Gamma $ if (i) $c^{\ast }(\gamma )\in \mathcal{C}%
\left( \gamma \right) $ for all $\gamma \in \Gamma $ and (ii) for all $j\in 
\mathcal{J}$ and all $\gamma _{j}\in \Gamma _{j}$,%
\begin{multline}
\mathbb{E}_{x}\left[ v_{j}\left( g_{j}\left( \gamma _{j}^{\ast },c_{j}^{\ast
}(\gamma ^{\ast },x)\right) ,g_{-j}\left( \gamma _{-j}^{\ast },c_{-j}^{\ast
}(\gamma ^{\ast },x)\right) ,x\right) \right] \geq \\
\mathbb{E}_{x}\left[ v_{j}\left( g_{j}\left( \gamma _{j},c_{j}^{\ast
}(\gamma _{j},\gamma _{-j}^{\ast },x)\right) ,g_{-j}\left( \gamma
_{-j}^{\ast },c_{-j}^{\ast }(\gamma _{j},\gamma _{-j}^{\ast },x)\right)
,x\right) \right] .  \label{original_eq_cond}
\end{multline}
\end{definition}

Yamashita (2010) and Xiong (2013) employ the notions of sequential
equilibrium and subgame perfect Nash equilibrium respectively with finite
types, whereas Szentes (2015) employs the notion of perfect Bayesian
equilibrium in the model of contractible contracts. All these notions,
including one defined in Definition \ref{def_eq} above, are in fact
identical except for the extent to which mixed strategies are allowed. A
common requirement of these equilibrium notions is that principals have no
incentive to deviate as long as there exists a continuation equilibrium upon
deviation where the deviating principal cannot gain.

A strongly robust equilibrium (Epstein and Peters (1999) and Han (2007)) is
an alternative notion of equilibrium. In our general competing mechanism
game, strongly robust PBE can be defined as follows.

\begin{definition}
\label{def_strong_eq}$(\gamma ^{\ast },c^{\ast })$ is a \emph{strongly
robust PBE} of a general competing mechanism game with $\Gamma $ if (i) it
is a PBE and (ii) for all $j\in \mathcal{J}$, all $\gamma _{j}\in \Gamma
_{j} $, and all $[c_{j}(\gamma _{j},\gamma _{-j}^{\ast }),c_{-j}(\gamma
_{j},\gamma _{-j}^{\ast })]\in \mathcal{C}_{j}(\gamma _{j},\gamma
_{-j}^{\ast }),$%
\begin{multline}
\mathbb{E}_{x}\left[ v_{j}\left( g_{j}\left( \gamma _{j}^{\ast },c_{j}^{\ast
}(\gamma ^{\ast },x)\right) ,g_{-j}\left( \gamma _{-j}^{\ast },c_{-j}^{\ast
}(\gamma ^{\ast },x)\right) ,x\right) \right] \geq \\
\mathbb{E}_{x}\left[ v_{j}\left( g_{j}\left( \gamma _{j},c_{j}(\gamma
_{j},\gamma _{-j}^{\ast },x)\right) ,g_{-j}\left( \gamma _{-j}^{\ast
},c_{-j}(\gamma _{j},\gamma _{-j}^{\ast },x)\right) ,x\right) \right] .
\label{strong_eq_cond}
\end{multline}
\end{definition}

Condition (ii) in Definition \ref{def_strong_eq} is equivalent to the
following condition: For all $j$ and all $\gamma _{j}\in \Gamma _{j}$%
\begin{multline*}
\mathbb{E}_{x}\left[ v_{j}\left( g_{j}\left( \gamma _{j}^{\ast },c_{j}^{\ast
}(\gamma ^{\ast },x)\right) ,g_{-j}\left( \gamma _{-j}^{\ast },c_{-j}^{\ast
}(\gamma ^{\ast },x)\right) ,x\right) \right] \geq \\
\sup_{\lbrack c_{j}(\gamma _{j},\gamma _{-j}^{\ast }),c_{-j}(\gamma
_{j},\gamma _{-j}^{\ast })]\in \mathcal{C}(\gamma _{j},\gamma _{-j}^{\ast })}%
\mathbb{E}_{x}\left[ v_{j}\left( g_{j}\left( \gamma _{j},c_{j}(\gamma
_{j},\gamma _{-j}^{\ast },x)\right) ,g_{-j}\left( \gamma _{-j}^{\ast
},c_{-j}(\gamma _{j},\gamma _{-j}^{\ast },x)\right) ,x\right) \right]
\end{multline*}%
A principal's equilibrium utility in a strongly robust PBE is compared to
his utility based on continuation equilibrium strategies of communicating
with all principals that are \emph{best} for him upon his deviation to any
mechanism. The literature has not identified the set of strongly robust PBE
allocations nor the greatest lower bound of each principal's strongly robust
PBE utility even in terms of mechanisms allowed in the game.

I propose a new notion of equilibrium. For all $\gamma \in \Gamma $, all $%
c_{-j}\left( \gamma \right) \in \mathcal{C}_{-j}(\gamma )$, let us define
the set of strategies of communicating with principal $j$ as follows 
\begin{equation*}
\mathcal{C}_{j}(\gamma |c_{-j}\left( \gamma \right) ):=\left\{ c_{j}(\gamma
):\left( c_{j}\left( \gamma \right) ,c_{-j}\left( \gamma \right) \right) \in 
\mathcal{C}(\gamma )\right\} .
\end{equation*}

\begin{definition}
\label{Def_robust_eq}$(\gamma ^{\ast },c^{\ast })$ is a \emph{robust PBE} of
a general competing mechanism game with $\Gamma $ if

\begin{enumerate}
\item[(i)] $c^{\ast }(\gamma )\in \mathcal{C}(\gamma )$ for all $\gamma \in
\Gamma $ and

\item[(ii)] for all $j\in \mathcal{J}$, all $\gamma _{j}\in \Gamma _{j}$,
and all $c_{j}(\gamma _{j},\gamma _{-j}^{\ast })\in \mathcal{C}_{j}(\gamma
_{j},\gamma _{-j}^{\ast }|c_{-j}^{\ast }(\gamma _{j},\gamma _{-j}^{\ast }))$%
, 
\begin{multline}
\mathbb{E}_{x}\left[ v_{j}\left( g_{j}\left( \gamma _{j}^{\ast },c_{j}^{\ast
}(\gamma ^{\ast },x)\right) ,g_{-j}\left( \gamma _{-j}^{\ast },c_{-j}^{\ast
}(\gamma ^{\ast },x)\right) ,x\right) \right] \geq \\
\mathbb{E}_{x}\left[ v_{j}\left( g_{j}\left( \gamma _{j},c_{j}(\gamma
_{j},\gamma _{-j}^{\ast },x)\right) ,g_{-j}\left( \gamma _{-j}^{\ast
},c_{-j}^{\ast }(\gamma _{j},\gamma _{-j}^{\ast },x)\right) ,x\right) \right]
.  \label{robust_eq_cond}
\end{multline}
\end{enumerate}
\end{definition}

A PBE $(\gamma ^{\ast },c^{\ast })$ is robust if any principal $j$ cannot
gain regardless of a profile of agents' continuation equilibrium strategies $%
c_{j}(\gamma _{j},\gamma _{-j}^{\ast })$ of communicating with $j$ upon his
deviation to any general mechanism $\gamma _{j}$, \emph{taking as given} the
profile of their continuation equilibrium strategies $c_{-j}^{\ast }(\gamma
_{j},\gamma _{-j}^{\ast })$ of communicating with non-deviating principals.

The notion of robust PBE is weaker than the notion of strongly robust PBE
(Epstein and Peters (1999), Han (2007)) in which a principal cannot gain in
all continuation equilibria agents may play upon his deviation. It is first
interesting to identify the utility environments where the notion of robust
PBE coincides with that of strongly robust PBE.

\begin{definition}
We say that principal $j$'s utility function satisfies the \textquotedblleft 
\emph{no action externalities}\textquotedblright\ condition if 
\begin{equation}
v_{j}(a_{j},a_{-j},x)=v_{j}(a_{j},a_{-j}^{\prime },x),\text{ }\forall
a_{j}\in A_{j},\forall (a_{-j},a_{-j}^{\prime })\in \left( A_{-j}\right)
^{2},\forall x\in X.  \label{no_ext}
\end{equation}
\end{definition}

(\ref{no_ext}) implies that principal $j$ is indifferent between any action
profiles that the other principals may choose, regardless of his own action
and the agents' types.

\begin{proposition}
Suppose that each agent's utility function is additively separable with
respect to principals' actions and that each principal's utility function
satisfies the \textquotedblleft no action externalities\textquotedblright\
condition. Then, the notion of robust PBE coincides with that of strongly
robust PBE.
\end{proposition}

\begin{proof}
Consider a robust PBE $(\gamma ^{\ast },c^{\ast })$. Because it is robust,
it satisfies Definition \ref{Def_robust_eq}.(ii). Because each agent's
utility function is additively separable with respect to principals'
actions, we have that 
\begin{eqnarray}
\mathcal{C}_{j}(\gamma _{j},\gamma _{-j}|c_{-j}(\gamma _{j},\gamma _{-j}))
&=&\mathcal{C}_{j}(\gamma _{j},\gamma _{-j}|c_{-j}^{\prime }(\gamma
_{j},\gamma _{-j}))\text{, }  \label{cont_eq_other_prin} \\
\forall j &\in &\mathcal{J}\text{, }\forall (\gamma _{j},\gamma _{-j})\in
\Gamma ,\forall \left( c_{-j}(\gamma _{j},\gamma _{-j}),c_{-j}^{\prime
}(\gamma _{j},\gamma _{-j})\right) \in \mathcal{C}_{-j}(\gamma _{j},\gamma
_{-j})^{2}  \notag
\end{eqnarray}%
Furthermore, because each principal's utility function satisfies the
\textquotedblleft no action externalities\textquotedblright\ condition, we
have that 
\begin{multline}
\mathbb{E}_{x}\left[ v_{j}\left( g_{j}\left( \gamma _{j},c_{j}(\gamma
,x)\right) ,g_{-j}\left( \gamma _{-j},c_{-j}(\gamma ,x)\right) ,x\right) %
\right] =  \label{prin_payoff} \\
\mathbb{E}_{x}\left[ v_{j}\left( g_{j}\left( \gamma _{j},c_{j}(\gamma
,x)\right) ,g_{-j}\left( \gamma _{-j},c_{-j}^{\prime }(\gamma ,x)\right)
,x\right) \right] , \\
\forall j\in \mathcal{J}\text{, }\forall \gamma =(\gamma _{j},\gamma
_{-j})\in \Gamma ,\forall \left( c_{-j}(\gamma ),c_{-j}^{\prime }(\gamma
)\right) \in \mathcal{C}_{-j}(\gamma )^{2}, \\
\forall c_{j}(\gamma ,x)\in \mathcal{C}_{j}(\gamma |c_{-j}(\gamma ))=%
\mathcal{C}_{j}(\gamma |c_{-j}^{\prime }(\gamma )),
\end{multline}%
where $\mathcal{C}_{j}(\gamma |c_{-j}(\gamma ))=\mathcal{C}_{j}(\gamma
|c_{-j}^{\prime }(\gamma ))$ is due to (\ref{cont_eq_other_prin}).

Then, Definition \ref{Def_robust_eq}.(ii) can be rewritten as follows: For
all $j\in \mathcal{J}$, all $\gamma _{j}\in \Gamma _{j}$, all $c_{-j}(\gamma
_{j},\gamma _{-j}^{\ast })\in \mathcal{C}_{-j}(\gamma _{j},\gamma
_{-j}^{\ast }),$ and all $c_{j}(\gamma _{j},\gamma _{-j}^{\ast })\in 
\mathcal{C}_{j}(\gamma _{j},\gamma _{-j}^{\ast }|c_{-j}^{\ast }(\gamma
_{j},\gamma _{-j}^{\ast }))=\mathcal{C}_{j}(\gamma _{j},\gamma _{-j}^{\ast
}|c_{-j}(\gamma _{j},\gamma _{-j}^{\ast }))$, 
\begin{multline*}
\mathbb{E}_{x}\left[ v_{j}\left( g_{j}\left( \gamma _{j}^{\ast },c_{j}^{\ast
}(\gamma ^{\ast },x)\right) ,g_{-j}\left( \gamma _{-j}^{\ast },c_{-j}^{\ast
}(\gamma ^{\ast },x)\right) ,x\right) \right] \geq \\
\mathbb{E}_{x}\left[ v_{j}\left( g_{j}\left( \gamma _{j},c_{j}(\gamma
_{j},\gamma _{-j}^{\ast },x)\right) ,g_{-j}\left( \gamma _{-j}^{\ast
},c_{-j}^{\ast }(\gamma _{j},\gamma _{-j}^{\ast },x)\right) ,x\right) \right]
= \\
\mathbb{E}_{x}\left[ v_{j}\left( g_{j}\left( \gamma _{j},c_{j}(\gamma
_{j},\gamma _{-j}^{\ast },x)\right) ,g_{-j}\left( \gamma _{-j}^{\ast
},c_{-j}(\gamma _{j},\gamma _{-j}^{\ast },x)\right) ,x\right) \right] ,
\end{multline*}%
where the inequality is simply Definition \ref{Def_robust_eq}.(ii) and the
equality holds because of (\ref{prin_payoff}). Therefore, a robust PBE is
strongly robust. On the other hand, any strongly robust is clearly robust.
\end{proof}

\bigskip

Note that the no action externalities condition for each principal's utility
function is not sufficient for the equivalence between robust PBE and
strongly robust PBE. The reason is that the set of continuation equilibrium
strategies of agents' communication with principal $j$ (i.e., $\mathcal{C}%
_{j}(\gamma _{j},\gamma _{-j}|c_{-j}(\gamma _{j},\gamma _{-j}))$) may depend
on the continuation strategies of their communication with the other
principals, $c_{-j}(\gamma _{j},\gamma _{-j})$. The separability condition
for each agent's utility function breaks down such dependence and
establishes (\ref{cont_eq_other_prin}). This, together with the no action
externalities condition for each principal's utility function, leads to the
equivalence between robust PBE and strongly robust PBE.

The notion of robust PBE is stronger than that of PBE because an allocation
can be supported in a PBE as long as there exists a continuation equilibrium
in which a principal cannot gain upon deviation. Consider the following
condition for each principal's payoff function.

\begin{definition}
We say that principal $j$'s payoff function satisfies the \textquotedblleft 
\emph{pure externalities}\textquotedblright\ condition if 
\begin{equation}
v_{j}(a_{j},a_{-j},x)=v_{j}(a_{j}^{\prime },a_{-j},x),\text{ }\forall \left(
a_{j},a_{j}^{\prime }\right) \in \left( A_{j}\right) ^{2},\forall a_{-j}\in
A_{-j},\forall x\in X.  \label{pure_ext}
\end{equation}
\end{definition}

That is, principal $j$ is indifferent between any actions he may choose,
regardless of the action profile that other principals choose and the
agents' types.

\begin{proposition}
Suppose that each principal's utility function satisfies the
\textquotedblleft pure externalities\textquotedblright\ condition. Then, the
notion of robust PBE coincides with that of PBE.
\end{proposition}

\begin{proof}
Consider a PBE $(\gamma ^{\ast },c^{\ast })$. If each principal's utility
function satisfies the \textquotedblleft pure
externalities\textquotedblright\ condition, the following condition is
satisfied: 
\begin{multline}
\mathbb{E}_{x}\left[ v_{j}\left( g_{j}\left( \gamma _{j}^{\ast },c_{j}^{\ast
}(\gamma ^{\ast },x)\right) ,g_{-j}\left( \gamma _{-j}^{\ast },c_{-j}^{\ast
}(\gamma ^{\ast },x)\right) ,x\right) \right] =  \label{indifference} \\
\mathbb{E}_{x}\left[ v_{j}\left( g_{j}\left( \gamma _{j},c_{j}(\gamma
_{j},\gamma _{-j}^{\ast },x)\right) ,g_{-j}\left( \gamma _{-j}^{\ast
},c_{-j}^{\ast }(\gamma _{j},\gamma _{-j}^{\ast },x)\right) ,x\right) \right]
, \\
\forall j\in \mathcal{J}\text{, }\forall \gamma _{j}\in \Gamma _{j},\forall
c_{j}(\gamma _{j},\gamma _{-j}^{\ast })\in \mathcal{C}_{j}(\gamma
_{j},\gamma _{-j}^{\ast }|c_{-j}^{\ast }(\gamma _{j},\gamma _{-j}^{\ast }))
\end{multline}%
(\ref{indifference}) implies that Definition \ref{Def_robust_eq}.(ii) is
satisfied. Therefore, $(\gamma ^{\ast },c^{\ast })$ is robust. On the other
hand, any robust PBE is clearly a PBE.Because it is robust, it satisfies
Definition \ref{def_eq}.(ii).
\end{proof}

\bigskip

For the equivalence between the notion of robust PBE and that of PBE, we
only need the \textquotedblleft pure externalities\textquotedblright\
condition for each principal's utility function, but no restrictions on the
agent's utility function. Even though each principal $j$ is indifferent
between any actions he may choose, he still faces a strategic choice of his
mechanism. The reason is that principal $j$'s mechanism $\gamma _{j}$
affects agents' continuation equilibrium strategies of communicating with
the other principals, $c_{-j}^{\ast }(\gamma _{j},\gamma _{-j}^{\ast },x)$,
which in turn affects principal $j$'s utility through $g_{-j}\left( \gamma
_{-j}^{\ast },c_{-j}^{\ast }(\gamma _{j},\gamma _{-j}^{\ast },x)\right) $.

\subsection{Variations of robust PBE}

One interesting variation of robust PBE is one where an agent's
communication with a non-deviating principal does not depend on any other
principal's mechanisms. That is, 
\begin{equation}
c_{j}^{\ast }(\gamma _{j}^{\ast },\gamma _{-j})=c_{j}^{\ast }(\gamma
_{j}^{\ast },\gamma _{-j}^{\prime })\text{ }\forall j\in \mathcal{J}\text{, }%
\forall \gamma _{-j},\gamma _{-j}^{\prime }\in \Gamma _{-j}
\label{invariant}
\end{equation}%
given a robust PBE $(\gamma ^{\ast },c^{\ast })$.

\begin{definition}
A robust PBE $(\gamma ^{\ast },c^{\ast })$ is a \emph{market-information
invariant }if it satisfies (\ref{invariant}).
\end{definition}

In a market-information invariant robust PBE, each agent's communication
with a non-deviating principal only depends on her type and the principal's
equilibrium mechanism.

I also formulate a uniquely implementable robust PBE as follows.

\begin{definition}
A robust PBE $(\gamma ^{\ast },c^{\ast })$ is uniquely implementable if, for
all $j\in \mathcal{J}$, all $k\neq j$ and all $\gamma _{j}\in \Gamma _{j}$ 
\begin{equation}
g_{k}\left( \gamma _{k}^{\ast },c_{k}(\gamma _{j},\gamma _{-j}^{\ast
})\right) =g_{k}\left( \gamma _{k}^{\ast },c_{k}^{\prime }(\gamma
_{j},\gamma _{-j}^{\ast })\right) \in \Pi _{j}^{B}\text{, }\forall
c_{k}(\gamma _{j},\gamma _{-j}^{\ast }),c_{k}^{\prime }(\gamma _{j},\gamma
_{-j}^{\ast })\in \mathcal{C}_{k}\left( \gamma _{j},\gamma _{-j}^{\ast
}\right)  \label{unique_eq}
\end{equation}
\end{definition}

Fix a path in the game after principals' mechanism announcement: either
equilibrium path $\gamma _{j}=\gamma _{j}^{\ast }$ or off the equilirium
path following a principal $j$'s unilateral deviation to an arbitrary
general mechanism $\gamma _{j}\neq \gamma _{j}^{\ast }$. Given such a path,
any profile of continuation-equilibrium strategies of communicating with each%
$\ $non-deviating principal $k$ always induces the same BIC direct mechanism
for $k$. This implies that if a robust PBE is uniquely implementable, it is
also strongly robust. Furthermore, not only is there no continuation
equilibrium where a principal can gain off the equilibrium path following
his deviation, but there is a unique equilibrium allocation regardless of a
continuation-equilibrium agents play on the equilibrium path.

\section{Competing BIC direct mechanisms\label{sec:competing_BIC}}

In this section, I provide the characterizations of market-information
invariant robust PBE allocations and uniquely implementable robust PBE
allocations respectively. Let us first define incentive compatible direct
mechanisms. A profile of direct mechanisms $\pi =(\pi _{1},\ldots ,\pi _{J})$
is \emph{Bayesian incentive compatible} (BIC) if for all $i\in \mathcal{I}$,
a.e. $x_{i}\in X_{i}$, and all $x_{i1},\ldots ,x_{iJ}\in X_{i}$%
\begin{multline}
\mathbb{E}_{x_{-i}}\left[ u_{i}(\pi _{1}(x),\ldots ,\pi
_{J}(x),x_{-i},x_{i})|x_{i}\right] \geq \\
\mathbb{E}_{x_{-i}}\left[ u_{i}(\pi _{1}(x_{i1},x_{-i}),\ldots ,\pi
_{J}(x_{iJ},x_{-i}),x_{-i},x_{i})|x_{i}\right] .  \label{BIC_whole}
\end{multline}%
Let $\Pi ^{B}$ be the set of all BIC profiles of direct mechanisms.

Principal $j$'s direct mechanism is \emph{individually BIC} if for all $i\in 
\mathcal{I}$, a.e. $x_{i}\in X_{i}$, and $x_{ij}\in X_{i},$%
\begin{equation}
\mathbb{E}_{x_{-i}}\left[ u_{ij}(\pi _{j}\left( x_{i},x_{-i}\right)
,x_{-i},x_{i})|x_{i}\right] \geq \mathbb{E}_{x_{-i}}\left[ u_{ij}(\pi
_{j}\left( x_{ij},x_{-i}\right) ,x_{-i},x_{i})|x_{i}\right] .
\label{BIC_individual}
\end{equation}%
Let $\Pi _{j}^{B}$ be the set of all individually BIC direct mechanisms for
principal $j.$ Lemma \ref{lemma_BIC} below shows that the set of BIC
profiles of direct mechanisms is the Cartesian product of individually BIC
direct mechanisms. This is due to the additive separability of the agent's
utility function with respect to principals' actions.

\begin{lemma}
\label{lemma_BIC}$\Pi ^{B}=\Pi _{1}^{B}\times \cdots \times \Pi _{J}^{B}.$
\end{lemma}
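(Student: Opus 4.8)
The plan is to exploit the separability of agents' payoffs, which is the only substantive ingredient. The set inclusion $\Pi_1^B \times \cdots \times \Pi_J^B \subseteq \Pi^B$ and the reverse inclusion will each follow by a direct manipulation of the incentive constraint once we write $U_i = \sum_{k=1}^J U_{ik}$ as in (\ref{agent_payoff1}). The central observation is that in the joint BIC constraint (\ref{BIC_whole}) agent $i$ chooses a report $x_{ik}$ \emph{separately} for each principal $k$, and because the objective decomposes additively with the $k$-th summand depending only on $x_{ik}$, the joint maximization over $(x_{i1},\ldots,x_{iJ})$ splits into $J$ independent maximization problems, one per principal. Truthful reporting solves the joint problem if and only if it solves each of the $J$ separate problems, and the latter is exactly the individual BIC condition (\ref{BIC_individual}).

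For the inclusion $\Pi_1^B \times \cdots \times \Pi_J^B \subseteq \Pi^B$, I would take $\pi_j \in \Pi_j^B$ for every $j$. Fixing an agent $i$ and any report profile $(x_{i1},\ldots,x_{iJ})$, I would apply (\ref{BIC_individual}) for each $k$ with the report $x_{ik}$, obtaining
\begin{equation*}
\mathbb{E}_{x_{-i}}\left[ U_{ik}(\pi_k(x_i,x_{-i}),x_{-i},x_i)|x_i\right] \geq \mathbb{E}_{x_{-i}}\left[ U_{ik}(\pi_k(x_{ik},x_{-i}),x_{-i},x_i)|x_i\right],
\end{equation*}
and then sum these $J$ inequalities. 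Using (\ref{agent_payoff1}) to recombine the summands into $U_i$ yields (\ref{BIC_whole}), so $\pi \in \Pi^B$. The only bookkeeping point is that each $\pi_k$ is individually BIC for a.e.\ $x_i$ off a (possibly $k$-dependent) null set; since there are finitely many principals, the union of these null sets is still null, so the summed inequality holds for a.e.\ $x_i$.

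For the reverse inclusion $\Pi^B \subseteq \Pi_1^B \times \cdots \times \Pi_J^B$, I would take $\pi \in \Pi^B$, fix a principal $j$ and an arbitrary report $x_{ij}' \in X_i$, and instantiate the joint constraint (\ref{BIC_whole}) at the \emph{selective} deviation in which agent $i$ reports truthfully to every principal $k \neq j$ (i.e.\ $x_{ik} = x_i$) and reports $x_{ij}'$ to principal $j$. After substituting $U_i = \sum_k U_{ik}$, all summands with $k \neq j$ appear identically on both sides and cancel, leaving precisely (\ref{BIC_individual}) for $\pi_j$ at the report $x_{ij}'$. Since $j$ and $x_{ij}'$ are arbitrary, $\pi_j \in \Pi_j^B$ for each $j$.

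I do not anticipate a genuine obstacle here; the statement is essentially the assertion that an additively separable incentive problem decouples across the independent report coordinates. The one place that requires mild care is the measure-theoretic quantifier ``a.e.\ $x_i$'': I would make explicit that the finite union of exceptional null sets (one per principal) remains null, so that the equivalence of the ``a.e.'' statements is legitimate, and that the selective-deviation argument in the reverse direction preserves the same null set used in (\ref{BIC_whole}).
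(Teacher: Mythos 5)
Your proposal is correct and follows essentially the same route as the paper: the inclusion $\Pi^{B}\subseteq \Pi_{1}^{B}\times\cdots\times\Pi_{J}^{B}$ is proved by the identical selective-deviation argument (truthful reports to all $k\neq j$, arbitrary report to $j$, then cancel the $k\neq j$ terms using separability), while the reverse inclusion is the summation argument the paper dismisses as straightforward. Your explicit handling of the finite union of null sets is a minor tightening the paper omits, not a different approach.
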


\begin{proof}
Consider any BIC profile of direct mechanisms $\pi =(\pi _{1},\ldots ,\pi
_{J})\in \Pi ^{B}.$ Because $\pi $ is BIC as a whole, I have that for all $%
j\in \mathcal{J}$, all $i\in \mathcal{I}$, a.e. $x_{i}\in X_{i}$, and all $%
x_{ij}\in X_{i}$, 
\begin{multline*}
\sum_{k=1}^{J}\mathbb{E}_{x_{-i}}\left[ u_{ik}(\pi
_{k}(x),x_{-i},x_{i})|x_{i}\right] \geq \\
\mathbb{E}_{x_{-i}}\left[ u_{ij}(\pi _{j}(x_{ij},x_{i}),x_{-i},x_{i})|x_{i}%
\right] +\sum_{k\neq j}\mathbb{E}_{x_{-i}}\left[ u_{ik}(\pi
_{k}(x),x_{-i},x_{i})|x_{i}\right] ,
\end{multline*}%
The inequality above implies that for all $i\in \mathcal{I}$, a.e. $x_{i}\in
X_{i}$, and all $x_{ij}\in X_{i}$, 
\begin{equation*}
\mathbb{E}_{x_{-i}}\left[ u_{ij}(\pi _{j}\left( x_{i},x_{-i}\right)
,x_{-i},x_{i})|x_{i}\right] \geq \mathbb{E}_{x_{-i}}\left[ u_{ij}(\pi
_{j}\left( x_{ij},x_{-i}\right) ,x_{-i},x_{i})|x_{i}\right] .
\end{equation*}%
This means that $\pi _{j}$ is individually BIC and hence $\pi _{j}\in \Pi
_{j}^{B}$ for all $j\in \mathcal{J}$. Therefore, $\pi =(\pi _{1},\ldots ,\pi
_{J})\in \Pi _{1}^{B}\times \cdots \times \Pi _{J}^{B}.$

It is straightforward to show that any profile of individually BIC direct
mechanisms $\pi =(\pi _{1},\ldots ,\pi _{J})\in \Pi _{1}^{B}\times \cdots
\times \Pi _{J}^{B}$ is also a BIC profile of direct mechanisms, i.e., $\pi
=(\pi _{1},\ldots ,\pi _{J})\in \Pi ^{B}$.
\end{proof}

\bigskip

I consider the competing BIC direct mechanism game where each principal is
restricted to offer only a BIC direct mechanism in $\Pi _{j}^{B}$.

\begin{definition}
$\pi =\left( \pi _{1},\ldots \pi _{J}\right) \in \Pi ^{B}$ is a profile of
equilibrium BIC direct mechanisms in a (truth-telling) PBE of the competing
mechanism game with $\Pi ^{B}$if for all $j\in \mathcal{J}$, 
\begin{equation}
\mathbb{E}_{x}\left[ v_{j}(\pi _{j}(x),\pi _{-j}(x),x)\right] \geq \mathbb{E}%
_{x}\left[ v_{j}(\pi _{j}^{\prime }(x),\pi _{-j}(x),x)\right] \text{ }%
\forall \pi _{j}^{\prime }\in \Pi _{j}^{B}  \label{BIC_eq}
\end{equation}
\end{definition}

Now I show that the strategy space for each principal $j$ in the the
competing mechanism game with $\Pi ^{B}$ is non-empty, convex and compact.

\begin{proposition}
\label{Prop_compact_BIC}$\Pi _{j}^{B}$ is non-empty, convex, and compact for
all $j\in \mathcal{J}$.
\end{proposition}

\begin{proof}
$\Pi _{j}^{B}$ is non-empty because a constant direct mechanism is BIC. It
is also a compact convex set. It is straightforward to show that $\Pi
_{j}^{B}$ is a convex set because $\lambda \pi _{j}+(1-\lambda )\pi
_{j}^{\prime }$ is also BIC for all $\pi _{j},\pi _{j}^{\prime }\in \Pi
_{j}^{B}$ and all $\lambda \in \lbrack 0,1].$

$\Pi _{j}^{B}$ is compact if and only if it is totally bounded and complete
(See Theorem 3.11.2.2 in Kolmogorov and Fomin (1975)). To show that $\Pi
_{j}^{B}$ is totally bounded, it is enough to prove that $\Pi _{j}^{B}$ is
countably compact (See Theorem 3.11.2.1 in Kolmogorov and Fomin (1975)),
i.e., every sequence in $\Pi _{j}^{B}$ has a limit point in $\Pi _{j}^{B}$.
Suppose that $\Pi _{j}^{B}$ is not countably compact so that there is some
sequence $\left\{ \pi _{j}^{n}\right\} \in \Pi _{j}^{B}$ that has all its
limit point(s) in the complement of $\Pi _{j}^{B}$. The latter statement is
equivalent to the statement that there exists some sequence $\left\{ \pi
_{j}^{n}\right\} \in \Pi _{j}^{B}$ such that if $\left\{ \pi
_{j}^{n_{k}}\right\} \subseteq \left\{ \pi _{j}^{n}\right\} $ is any
arbitrary convergent subsequence with limit $\pi _{j}^{\ast }\notin \Pi
_{j}^{B}$. First, notice that $\left\{ \pi _{j}^{n_{k}}\right\} \in \Pi
_{j}^{B}$ since $\left\{ \pi _{j}^{n_{k}}\right\} \subseteq \left\{ \pi
_{j}^{n}\right\} \in \Pi _{j}^{B}$. Now, because $\pi _{j}^{\ast }\notin \Pi
_{j}^{B}$, there is some $i\in \mathcal{I}$ and some $D_{i}\subset X_{i}$
with positive measure such that $\pi _{j}^{\ast }\notin \Pi _{j}^{B}$ does
not satisfy (\ref{BIC_individual}) for all $x_{i}\in D_{i}$. Because $%
\left\{ \pi _{j}^{n_{k}}\right\} $ converges to $\pi _{j}^{\ast }$, there
exists some $N_{\circ }\in \mathbb{N}$ such that $\left\{ \pi
_{j}^{n_{k}}\right\} $ also violates (\ref{BIC_individual}) for all $%
x_{i}\in D_{i}$ and all $n_{k}\geq N_{\circ }$. This contradicts the fact
that $\left\{ \pi _{j}^{n_{k}}\right\} \in \Pi _{j}^{B}$. Therefore, $\Pi
_{j}^{B}$ must be countably compact, which implies that $\Pi _{j}^{B}$ is
totally bounded.

Finally, I show that $\Pi _{j}^{B}$ is complete. Recall that $\Pi _{j}$ is
the set of all possible (measurable) mappings $\pi _{j}:X\rightarrow A_{j}$.
Therefore, clearly $\Pi _{j}^{B}\subset \Pi _{j}\subseteq L^{p}\left(
X\right) $ with $1<p<\infty $. Because $L^{p}\left( X\right) $ is complete,
it suffices to show that $\Pi _{j}^{B}$ is closed. $\Pi _{j}^{B}$ is closed
if and only if every convergent sequence $\left\{ \pi _{j}^{k}\right\} $ in $%
\Pi _{j}^{B}$ has its limit $\pi _{j}^{\prime \ast }$in $\Pi _{j}^{B}.$ On
the contrary, suppose that there exists a convergent sequence $\left\{ \pi
_{j}^{k}\right\} $ in $\Pi _{j}^{B}$ with $\pi _{j}^{\prime \ast }\notin \Pi
_{j}^{B},$that is, for some $i\in \mathcal{I}$ and some $D_{i}^{\prime }\in
X_{i}$ with positive measure such that $\left\{ \pi _{j}^{k}\right\} $
violates (\ref{BIC_individual}) for all $x_{i}\in D_{i}^{\prime }$. Because $%
\left\{ \pi _{j}^{k}\right\} $ converges to $\pi _{j}^{\prime \ast }$, there
exists some $K\in \mathbb{N}$ such that $\left\{ \pi _{j}^{k}\right\} $ also
violates (\ref{BIC_individual}) for all $x_{i}\in D_{i}^{\prime }$ and all $%
k\geq K$. This contradicts the fact that $\left\{ \pi _{j}^{k}\right\} \in
\Pi _{j}^{B}.$ Thus, $\Pi _{j}^{B}$ is closed.
\end{proof}

\bigskip

Given Proposition \ref{Prop_compact_BIC}, the existence of a PBE BIC direct
mechanisms is straightforward.

\begin{corollary}
\label{corollary_existence}A (truth-telling) PBE exists in the competing
mechanism game with $\Pi ^{B}$
\end{corollary}

\begin{proof}
Because each $\Pi _{j}^{B}$ is non-empty compact convex (Proposition \ref%
{Prop_compact_BIC}) and $\mathbb{E}_{x}\left[ v_{j}(\pi (x),x)\right] $ is
linear in each $\Pi _{j}^{B}$, there exists a PBE in the competing mechanism
game with $\Pi ^{B}$ (Glicksberg 1952, Fudenberg and Tirole 1991).
\end{proof}

\bigskip

Furthermore, any (truth-telling) PBE in competing BIC direct mechanisms is
robust.

\begin{proposition}
A (truth-telling) PBE in the competing mechanism game with $\Pi ^{B}$ is $%
\emph{robust}$.
\end{proposition}

\begin{proof}
Suppose that principal $j$ deviates to $\pi _{j}^{\prime }$. Agents choose
any arbitrary profile $\left[ c_{1j}\left( \pi _{j}^{\prime },\pi _{j}\cdot
\right) ,\ldots ,c_{Ij}\left( \pi _{j}^{\prime },\pi _{j},\cdot \right) %
\right] $ of continuation equilibrium strategies of communicating wih $j$,
given their truthful type reporting to non-deviating principals. Then, $%
\left[ c_{1j}\left( \pi _{j}^{\prime },\pi _{j}\cdot \right) ,\ldots
,c_{Ij}\left( \pi _{j}^{\prime },\pi _{j},\cdot \right) \right] $ induces a
direct mechanism $\pi _{j}^{\prime \prime }$ such that for all $%
(x_{1},\ldots ,x_{I})\in X$ 
\begin{equation*}
\pi _{j}^{\prime \prime }(x_{1},\ldots ,x_{I})=\int_{X_{1}}\cdots
\int_{X_{I}}\pi _{j}^{\prime }(x_{1j},\ldots ,x_{Ij})dc_{1j}(\pi
_{j}^{\prime },\pi _{j},x_{1})\cdots dc_{Ij}(\pi _{j}^{\prime },\pi
_{j},x_{I})
\end{equation*}%
and $\pi _{j}^{\prime \prime }\in \Pi _{j}^{B}$. Because of (\ref{BIC_eq}),
I have that 
\begin{equation*}
\mathbb{E}_{x}\left[ v_{j}(\pi _{j}(x),\pi _{-j}(x),x)\right] \geq \mathbb{E}%
_{x}\left[ v_{j}(\pi _{j}^{\prime \prime }(x),\pi _{-j}(x),x)\right] .
\end{equation*}
\end{proof}

\subsection{Market-information invariant robust PBE allocations}

Now I show that the set of market-information invariant robust PBE
allocations in a competing mechanism game with $\Gamma $ is the set of
(truthful) PBE allocations in the competing direct mechanism game with $\Pi
^{B}$. To establish the equivalence, first recall that given $\gamma \in
\Gamma $, a profile of communication strategies $c(\gamma )$ induces a
direct mechanism $g_{j}(\gamma _{j},c_{0j}(\gamma _{k}),c_{1j}\left( \gamma
,\cdot \right) ,\ldots ,c_{Ij}\left( \gamma ,\cdot \right) )$ for each
principal $j\in \mathcal{J}$.

\begin{lemma}
\label{lemma_eq_BIC}Given any $\gamma \in \Gamma $ and any $c_{j}(\gamma
)=\left( c_{0j}(\gamma _{j}),c_{1j}\left( \gamma ,\cdot \right) ,\ldots
,c_{Ij}\left( \gamma ,\cdot \right) \right) \in \mathcal{C}_{j}(\gamma ),$ 
\begin{equation*}
g_{j}(\gamma _{j},c_{0j}(\gamma _{j}),c_{1j}\left( \gamma ,\cdot \right)
,\ldots ,c_{Ij}\left( \gamma ,\cdot \right) )\in \Pi _{j}^{B}\text{ for all }%
j\in \mathcal{J}
\end{equation*}
\end{lemma}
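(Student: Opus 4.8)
The plan is to show that the direct mechanism induced by the continuation-equilibrium communication strategies, namely $\pi_j := g_j(\gamma_j, c_{0j}(\gamma_j), c_{1j}(\gamma, \cdot), \ldots, c_{Ij}(\gamma, \cdot))$, satisfies the individual BIC inequality (\ref{BIC_individual}), and to obtain this by selecting one well-chosen instance of the continuation-equilibrium condition (\ref{agent_eq_payoff}). The central observation is that, inside the game $\gamma$, an agent $i$ of true type $x_i$ can mimic a misreport of a false type $x_{ij}$ to principal $j$ in the direct mechanism $\pi_j$ simply by playing the message distribution $c_{ij}(\gamma, x_{ij})$ that the equilibrium communication strategy prescribes for type $x_{ij}$, while communicating according to equilibrium with every other principal.

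First I would fix $\gamma \in \Gamma$, $c(\gamma) \in \mathcal{C}(\gamma)$, and $j \in \mathcal{J}$, and record the key identity that, for every $x_{ij} \in X_i$,
\[
\pi_j(x_{ij}, x_{-i}) = g_j(\gamma_j, c_{0j}(\gamma_j), c_{ij}(\gamma, x_{ij}), c_{-ij}(\gamma, x_{-i})),
\]
which is immediate from the integral definition (\ref{direct_mech}) of $g_j$: substituting $x_{ij}$ for agent $i$'s type argument replaces the factor $dc_{ij}(\gamma, x_i)$ by $dc_{ij}(\gamma, x_{ij})$ and leaves the remaining factors unchanged. Next I would instantiate (\ref{agent_eq_payoff}) at the unilateral deviation $\mu_{ij} = c_{ij}(\gamma, x_{ij})$ and $\mu_{ik} = c_{ik}(\gamma, x_i)$ for all $k \neq j$; this is admissible because each $c_{ik}(\gamma, \cdot)$ takes values in $\Delta(M_{ik})$. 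For every $k \neq j$ this choice makes the $k$-th summand on the right-hand side coincide with the $k$-th summand on the left-hand side, so by the additive (separable) structure (\ref{agent_payoff1}) of $U_i$ all terms with $k \neq j$ cancel. What remains, after rewriting both sides via the identity above, is exactly
\[
\mathbb{E}_{x_{-i}}\left[ U_{ij}(\pi_j(x_i, x_{-i}), x_{-i}, x_i) | x_i \right] \geq \mathbb{E}_{x_{-i}}\left[ U_{ij}(\pi_j(x_{ij}, x_{-i}), x_{-i}, x_i) | x_i \right],
\]
which is (\ref{BIC_individual}). Since this holds for a.e. $x_i$ and all $x_{ij} \in X_i$, and $i$, $j$ were arbitrary, we conclude $\pi_j \in \Pi_j^B$ for all $j \in \mathcal{J}$.

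The argument is essentially bookkeeping, so I do not anticipate a real obstacle; the one point that genuinely matters is that separability is precisely what allows the $k \neq j$ terms to cancel. Without the additive form (\ref{agent_payoff1}), a unilateral deviation toward principal $j$ alone could not be isolated, and the continuation-equilibrium condition — which only constrains the agent's joint deviation across all principals — would not, on its own, yield the per-principal constraint (\ref{BIC_individual}). The measurability of $\pi_j$ and the null-set technicalities are routine: (\ref{agent_eq_payoff}) holds on a single full-measure set of types $x_i$ simultaneously for all deviation profiles, so instantiating $\mu_{ij} = c_{ij}(\gamma, x_{ij})$ for each $x_{ij}$ causes no difficulty.
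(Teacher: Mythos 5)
Your proof is correct and is essentially the paper's argument: the paper proceeds by contradiction (a violation of (\ref{BIC_individual}) would mean agent $i$ gains by switching her message distribution to principal $j$ to $c_{ij}(\gamma,x_i')$ while keeping her other communications fixed, contradicting the continuation-equilibrium condition), which is just the contrapositive of your direct derivation via the mimicking deviation $\mu_{ij}=c_{ij}(\gamma,x_{ij})$. Both hinge on the same two points you identify — that a type misreport in $\pi_j$ is replicated by a unilateral communication deviation toward principal $j$ alone, and that separability lets the $k\neq j$ terms cancel — so there is nothing further to add.
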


\begin{proof}
Fix a profile of mechanisms $\gamma \in \Gamma $ and a continuation
equilibrium $c_{j}(\gamma )\in \mathcal{C}_{j}(\gamma ).$ Suppose that $%
g_{j}(\gamma _{j},c_{0j}(\gamma _{j}),c_{1j}\left( \gamma ,\cdot \right)
,\ldots ,c_{Ij}\left( \gamma ,\cdot \right) )$ is not individually BIC. That
means that for some $j$, some $i$, some $B_{i}\subset X_{i}$ with $%
F(B_{i},X_{-i})>0$, and all $x_{i}\in B_{i},$ there exists $x_{i}^{\prime
}\neq x_{i}$ such that 
\begin{multline*}
\mathbb{E}_{x_{-i}}\left[ u_{ij}(g_{j}(\gamma _{j},c_{0j}(\gamma
_{j}),c_{-ij}\left( \gamma ,x_{-i}\right) ,c_{ij}\left( \gamma
,x_{i}^{\prime }\right) ),x_{-i},x_{i})|x_{i}\right] > \\
\mathbb{E}_{x_{-i}}\left[ u_{ij}(g_{j}(\gamma _{j},c_{0j}(\gamma
_{j}),c_{-ij}\left( \gamma ,x_{-i}\right) ,c_{ij}\left( \gamma ,x_{i}\right)
),x_{-i},x_{i})|x_{i}\right] .
\end{multline*}%
This contradicts that $c_{ij}\left( \gamma ,\cdot \right) $ is agent $i$'s
continuation-equilibrium strategy of communicating with principal $j$ given $%
\gamma $. Therefore, $g_{j}(\gamma _{j},c_{0j}(\gamma _{j}),c_{1j}\left(
\gamma ,\cdot \right) ,\ldots ,c_{Ij}\left( \gamma ,\cdot \right) )$ must be
individually BIC for all $j\in \mathcal{J}$.
\end{proof}

\bigskip

Let $\Pi _{B}^{\ast }$ be the set of all equilibrium allocations in
(truth-telling) PBEs of the competing direct mechanism with $\Pi ^{B}$: 
\begin{equation*}
\Pi _{B}^{\ast }:=\left\{ \pi \in \Pi ^{B}:\mathbb{E}_{x}\left[ v_{j}(\pi
_{j}(x),\pi _{-j}(x),x)\right] \geq \mathbb{E}_{x}\left[ v_{j}\left( \pi
_{j}^{\prime }(x),\pi _{-j}(x),x\right) \right] \text{ }\forall \pi
_{j}^{\prime }\in \Pi _{j}^{B}\right\}
\end{equation*}%
Because of Corollary \ref{corollary_existence}, $\Pi _{B}^{\ast }$ is
non-empty. Let $\Pi _{MR}^{\ast }$ be the set of all market-information
invariant robust PBE allocations in a general competing mechanism game with $%
\Gamma $: 
\begin{equation*}
\Pi _{MR}^{\ast }:=\left\{ 
\begin{array}{c}
\left( g_{k}\left( \gamma _{k}^{\ast },c_{01}^{\ast }(\gamma _{k}^{\ast
}),c_{1k}^{\ast }\left( \gamma ^{\ast },\cdot \right) ,\ldots ,c_{Ik}^{\ast
}\left( \gamma ^{\ast },\cdot \right) \right) \right) _{k\in \mathcal{J}}\in
\Pi ^{B}: \\ 
(\gamma ^{\ast },c^{\ast })\text{ is a market-information invariant robust
PBE.}%
\end{array}%
\right\} \text{.}
\end{equation*}

\begin{theorem}
\label{theorem2}$\Pi _{MR}^{\ast }=\Pi _{B}^{\ast }$
\end{theorem}

\begin{proof}
First of all, I prove that $\Pi _{MR}^{\ast }\supseteq \Pi _{B}^{\ast }.$
Fix $\pi ^{\ast }=\left( \pi _{1}^{\ast },\ldots \pi _{J}^{\ast }\right) \in
\Pi _{B}^{\ast }$. For each $j$ and $i$, let $M_{ij}$ be homeomorphic to $%
X_{i},$ that is, there is a homeomorphism $\phi _{ij}:X_{i}\rightarrow
M_{ij} $. In a competing general competing mechanism game with $\Gamma $,
each principal $j$ offers $\gamma _{j}^{\ast }$ such that $\gamma _{j}^{\ast
}\left( m_{0j},\phi _{1j}\left( x_{1}\right) ,\ldots ,\phi _{Ij}\left(
x_{I}\right) \right) =\pi _{j}\left( x_{1},\ldots ,x_{I}\right) $ for all $%
m_{0j}\in M_{0j}$ and all $\left( x_{1},\ldots ,x_{I}\right) \in X$. Because 
$\pi ^{\ast }=\left( \pi _{1}^{\ast },\ldots \pi _{J}^{\ast }\right) \in \Pi
_{B}^{\ast }$, it is agent $i$'s continuation-equilibrium strategy to use 
\begin{equation}
c_{ij}^{\ast }(\gamma _{j}^{\ast },\gamma _{-j},x_{i})=\phi _{ij}(x_{i})
\label{thm_robust_0}
\end{equation}%
for communicating with principal $j$ for all $\gamma _{-j}\in \Gamma _{-j}$
and all $x_{i}\in X_{i}.$ Since principapl $j$ has no role in determining
his action in $\gamma _{j}$, I let $c_{0j}^{\ast }(\gamma _{j}^{\ast
})=m_{0j}$ for some $m_{0j}\in M_{ij}$. This implies that $\pi ^{\ast
}=\left( \pi _{1}^{\ast },\ldots \pi _{J}^{\ast }\right) \in \Pi ^{B}$ is
principal $j$'s utility on the equilibrium path in a competing general
competing mechanism game with $\Gamma $.

Suppose that principal $j$ deviates to $\gamma _{j}^{\prime }\in \Gamma _{j}$%
. Conditional on a profile of strategies of communicating with non-deviating
principals $c_{-j}^{\ast }(\gamma _{-j}^{\ast },\gamma _{j}^{\prime })$,
suppose that agents choose a profile of strategies of communcating with
principal $j$, $c_{j}(\gamma _{j}^{\prime },\gamma _{-j}^{\ast })\in 
\mathcal{C}_{j}(\gamma _{j}^{\prime },\gamma _{-j}^{\ast }|c_{-j}^{\ast
}(\gamma _{j}^{\prime },\gamma _{-j}^{\ast }))$. The strategy profile $%
c_{j}(\gamma _{j}^{\prime },\gamma _{-j}^{\ast })$ induces a BIC direct
mechanism $g_{j}\left( \gamma _{j}^{\prime },c_{j}(\gamma _{j}^{\prime
},\gamma _{-j}^{\ast })\right) \in \Pi _{j}^{B}$. Because of (\ref%
{thm_robust_0}), the continuation-equilibrium utility for principal $j$
becomes 
\begin{equation*}
\mathbb{E}_{x}\left[ v_{j}\left( g_{j}\left( \gamma _{j}^{\prime
},c_{j}(\gamma _{j}^{\prime },\gamma _{-j}^{\ast },x)\right) ,\pi
_{-j}^{\ast }(x),x\right) \right] .
\end{equation*}%
Because $\pi ^{\ast }=\left( \pi _{1}^{\ast },\ldots \pi _{J}^{\ast }\right)
\in \Pi _{B}^{\ast }$, I have that 
\begin{multline}
\mathbb{E}_{x}\left[ v_{j}(\pi _{j}^{\ast }(x),\pi _{-j}^{\ast }(x),x)\right]
\geq  \label{thm_robust_1} \\
\mathbb{E}_{x}\left[ v_{j}\left( g_{j}\left( \gamma _{j}^{\prime
},c_{j}(\gamma _{j}^{\prime },\gamma _{-j}^{\ast },x)\right) ,\pi
_{-j}^{\ast }(x),x\right) \right] \text{ }\forall c_{j}(\gamma _{j}^{\prime
},\gamma _{-j}^{\ast })\in \mathcal{C}_{j}(\gamma _{j}^{\prime },\gamma
_{-j}^{\ast }|c_{-j}^{\ast }(\gamma _{j}^{\prime },\gamma _{-j}^{\ast })).
\end{multline}%
The inequality above show that $\pi ^{\ast }=\left( \pi _{1}^{\ast },\ldots
\pi _{J}^{\ast }\right) \in \Pi _{MR}^{\ast }$ because the left-hand side is
principal $j$'s utility on the equilibrium path in the competing general
competing mechanism game with $\Gamma $ given (\ref{thm_robust_0}).

Second, I prove $\Pi _{MR}^{\ast }\subseteq \Pi _{B}^{\ast }.$ Fix $\pi
^{\ast }=\left( \pi _{1}^{\ast },\ldots \pi _{J}^{\ast }\right) \in \Pi
_{MR}^{\ast }$. Then, (\ref{thm_robust_1}) is satisfied. For all $\gamma
_{j}\in \Gamma _{j}$, let 
\begin{equation*}
G_{j}(\gamma _{j})=\left\{ g_{j}(\gamma _{j},c_{j}(\gamma _{j},\gamma
_{-j}^{\ast }))\in \Pi _{j}^{B}:c_{j}(\gamma _{j},\gamma _{-j}^{\ast })\in 
\mathcal{C}_{j}(\gamma _{j},\gamma _{-j}^{\ast })\right\} .
\end{equation*}%
be the set of BIC direct mechanisms that are induced by all possible
continuation equilibrium communication strategy profiles. Because of Lemma %
\ref{lemma_eq_BIC}, $\tbigcup\limits_{\gamma _{j}\in \Gamma _{j}}G_{j}\left(
\gamma _{j}\right) \subseteq \Pi _{j}^{B}.$ To see $\tbigcup\limits_{\gamma
_{j}\in \Gamma _{j}}G_{j}\left( \gamma _{j}\right) \supseteq \Pi _{j}^{B},$
pick any $\pi _{j}\in \Pi _{j}^{B}$. Given a homeomorphism $\phi
_{ij}:X_{i}\rightarrow M_{ij},$ I choose a general mechanism $\gamma
_{j}\left( m_{0j},\phi _{1j}\left( x_{1}\right) ,\ldots ,\phi _{Ij}\left(
x_{I}\right) \right) =\pi _{j}\left( x_{1},\ldots ,x_{I}\right) $ for all $%
m_{0j}\in M_{0j}$ and all $\left( x_{1},\ldots ,x_{I}\right) \in X$. Then, $%
c_{j}(\gamma _{j},\gamma _{-j}^{\ast })=\left\{ c_{j}(\gamma
_{j}),c_{1j}(\gamma _{j},\gamma _{-j}^{\ast }),\ldots ,c_{Ij}(\gamma
_{j},\gamma _{-j}^{\ast })\right\} =\{m_{0j},\phi _{1j},\ldots ,\phi _{Ij}\}$
is a profile of continuation-equilibrium strategies of communicating with
principal $j$ and $\pi _{j}=g_{j}(\gamma _{j},c_{j}(\gamma _{j},\gamma
_{-j}^{\ast }))\in \tbigcup\limits_{\gamma _{j}\in \Gamma _{j}}G_{j}\left(
\gamma _{j}\right) $. Therefore, I have 
\begin{equation}
\tbigcup\limits_{\gamma _{j}\in \Gamma _{j}}G_{j}\left( \gamma _{j}\right)
=\Pi _{j}^{B}  \label{thm_robust2}
\end{equation}%
(\ref{thm_robust_1}) and (\ref{thm_robust2}) imply that 
\begin{equation*}
\mathbb{E}_{x}\left[ v_{j}(\pi _{j}^{\ast }(x),\pi _{-j}^{\ast }(x),x)\right]
\geq \mathbb{E}_{x}\left[ v_{j}\left( \pi _{j}^{\prime }(x),\pi _{-j}^{\ast
}(x),x\right) \right] \text{ }\forall \pi _{j}^{\prime }\in \Pi _{j}^{B},
\end{equation*}%
which implies that $\pi ^{\ast }=\left( \pi _{1}^{\ast },\ldots \pi
_{J}^{\ast }\right) \in \Pi _{B}^{\ast }$.
\end{proof}

\bigskip

As long as our interest is the set of market-information invariant robust
PBE allocations, it is enough to focus on equilibrium allocations in
(truth-telling) PBEs of the competing BIC direct mechanism game. Such
equilibrium allocations are supported when the BIC direct mechanism each
non-deviating principal implements does not depend on whether a competing
principal deviates or not. Furthermore, equilibrium allocations in
(truth-telling) PBEs of the competing BIC direct mechanism game is free from
Szentes' critique since $\Pi _{MR}^{\ast }=\Pi _{B}^{\ast }$.

\subsection{Uniquely implementable robust PBE allocations}

Fix a uniquely implementable robust PBE $\left( \gamma ^{\ast },c^{\ast
}\right) $. $\pi _{j}^{\ast }=g_{j}(\gamma _{j}^{\ast },c_{j}^{\ast }(\gamma
^{\ast }))$ is principal $j$'s unique BIC direct mechanism on the
equilibrium path. Given the additive separability of each agent's utility
function, $\pi _{j}^{\ast }$ is also his unique BIC direct mechanism off the
path following a competing principal's deviation. Let $\Pi _{UI}^{\ast }$ be
the set of all market-information invariant robust PBE allocations in a
general competing mechanism game with $\Gamma $: 
\begin{equation*}
\Pi _{UR}^{\ast }:=\left\{ 
\begin{array}{c}
\left( g_{k}\left( \gamma _{k}^{\ast },c_{01}^{\ast }(\gamma _{k}^{\ast
}),c_{1k}^{\ast }\left( \gamma ^{\ast },\cdot \right) ,\ldots ,c_{Ik}^{\ast
}\left( \gamma ^{\ast },\cdot \right) \right) \right) _{k\in \mathcal{J}}\in
\Pi ^{B}: \\ 
(\gamma ^{\ast },c^{\ast })\text{ is a uniquely implementable robust PBE.}%
\end{array}%
\right\} \text{.}
\end{equation*}

Applying the results on Bayesian implementation in the literature, it is
easy to see that a uniquely implementable robust PBE allocation $\pi ^{\ast
}=\left( \pi _{1}^{\ast },\ldots \pi _{J}^{\ast }\right) $ in a competing
mechanism game with $\Gamma $ must satisfy

\begin{enumerate}
\item for all $j\in \mathcal{J}$, $\pi _{j}^{\ast }$ satisfies Bayesian
monotonicity (Definition 3.1 in Palfrey and Srivastava (2018))

\item $\pi ^{\ast }$ is a truth-telling PBE BIC direct mechanisms in the
competing mechanism game with $\Pi ^{B}$ (i.e., $\pi ^{\ast }\in \Pi
_{B}^{\ast }$)
\end{enumerate}

That is, 
\begin{equation}
\Pi _{UR}^{\ast }\subseteq \Pi _{UB}^{\ast },  \label{UR1}
\end{equation}%
where 
\begin{equation*}
\Pi _{UB}^{\ast }:=\left\{ 
\begin{array}{c}
\pi =\left( \pi _{1},\ldots ,\pi _{J}\right) \in \Pi _{B}^{\ast }: \\ 
\pi _{j}\text{ satisfies Bayesian monotonicity }\forall j\in \mathcal{J}%
\end{array}%
\right\}
\end{equation*}%
(\ref{UR1}) is an extension of the necessary conditions for Bayesian
implementation in Theorem 3.1 in Palfrey and Srivastava (2018). Furthermore,
when additional conditions on the information structure on the type space
are satisfied (See Theorem 3.2 in Palfrey and Srivastava (2018)), conditions
1 and 2 are sufficient conditions for a uniquely implementable robust PBE
allocation in a competing mechanism game with $\Gamma $. That is, 
\begin{equation}
\Pi _{UR}^{\ast }\supseteq \Pi _{UB}^{\ast }  \label{UR2}
\end{equation}%
All these results can be adopted seemlessly given the additive separability
of each agent's utility function. From (\ref{UR1}) and (\ref{UR2}), we have 
\begin{equation}
\Pi _{UR}^{\ast }=\Pi _{UB}^{\ast }  \label{UR3}
\end{equation}%
Note that 
\begin{equation}
\Pi _{UB}^{\ast }\subset \Pi _{B}^{\ast }.  \label{UR4}
\end{equation}

Given $\pi ^{\ast }=\left( \pi _{1}^{\ast },\ldots \pi _{J}^{\ast }\right) $
that satisfies conditions 1 and 2, each principal $j$'s equilibrium
(indirect) mechanism can be constructed the same way that an indirect
mechanism is constructed to implement a BIC direct mechanism with Bayesian
monotonicity in a model with a single principal. In such an equilibrium
mechanism, a principal does not send a message to himself and agents'
messages fully determines his action.

Furthermore, a principal only need to evaluate his continuation-equilibrium
payoff upon deviation to a BIC direct mechanism for a profitable deviation
even if a deviation to a general mechanism is possible. Therefore, a
uniquely implementable robust PBE is free from Szentes' critique.

\section{\label{sec: robust_PBE}Robust PBE allocations}

Not every robust PBE allocation is market-information invariant. Some of
them can be supported only when agents play a contiuation equilibrium upon a
principal's deviation that make non-deviating principals change their BIC
direct mechanisms. In this section, I fully characterize the set of robust
PBE allocations in the general competing mechanism game. The greatest lower
bound of each principals' robust PBE utility in the general competing
mechanism game is greater than that in the standard competing mechanism
game. The same thing can be shown for the notion of PBE. One advantage of
the notion of robust PBE is that the incentive compatibility alone is
sufficient to characterize the set of robust equilibrium allocations given
the additive separability of agents' utility functions. This point is
explained in details in Section \ref{sec:discussion}.

Let $\Pi _{R}^{\ast }$ be the set of all robust PBE allocations of a general
competing mechanism game with $\Gamma $:%
\begin{equation*}
\Pi _{R}^{\ast }:=\left\{ \left( g_{k}\left( \gamma _{k}^{\ast
},c_{01}^{\ast }(\gamma _{k}^{\ast }),c_{1k}^{\ast }\left( \gamma ^{\ast
},\cdot \right) ,\ldots ,c_{Ik}^{\ast }\left( \gamma ^{\ast },\cdot \right)
\right) \right) _{k\in \mathcal{J}}\in \Pi ^{B}:(\gamma ^{\ast },c^{\ast })%
\text{ is a robust PBE.}\right\} .
\end{equation*}%
Our main result is that $\Pi _{R}^{\ast }$ is characterized by only BIC
direct mechanisms without reference to $\Gamma $. For all $j\in \mathcal{J}$%
, let 
\begin{equation}
\underline{V}_{j}=\min_{\pi _{-j}\in \Pi _{-j}^{B}}\max_{\pi _{j}\in \Pi
_{j}^{B}}\mathbb{E}_{x}\left[ v_{j}(\pi _{j}(x),\pi _{-j}(x),x)\right] .
\label{lower_bound}
\end{equation}%
Because $\mathbb{E}_{x}\left[ v_{j}(\pi _{1}(x),\ldots ,\pi _{J}(x),x)\right]
$ is the expected utility and each $\Pi _{k}^{B}$ is compact (Proposition %
\ref{Prop_compact_BIC}), I can then apply Berge's Maximum Theorem
(Aliprantis and Border (1999)) to show that $\min_{\pi _{-j}\in \Pi
_{-j}^{B}}\max_{\pi _{j}\in \Pi _{j}^{B}}\mathbb{E}_{x}\left[ v_{j}(\pi
_{j}(x),\pi _{-j}(x),x)\right] $ is well defined.

Define 
\begin{equation*}
\Pi _{RB}^{\ast }:=\left\{ \pi \in \Pi ^{B}:\mathbb{E}_{x}\left[ v_{j}(\pi
(x),x)\right] \geq \underline{V}_{j}\text{ for all }j\in \mathcal{J}\right\}
\end{equation*}

\begin{theorem}
\label{theorem1} 
\begin{equation}
\Pi _{R}^{\ast }=\Pi _{RB}^{\ast }.  \label{PBEallocations1}
\end{equation}
\end{theorem}

\begin{proof}
I first show that any robust PBE allocation must generate a utility no less
than $\underline{V}_{j}$ for all $j\in \mathcal{J}$. Fix a robust PBE $%
(\gamma ^{\ast },c^{\ast })$. On contrary, suppose that for some $j\in 
\mathcal{J}$ 
\begin{equation}
\mathbb{E}_{x}\left[ v_{j}\left( g_{j}\left( \gamma _{j}^{\ast },c_{j}^{\ast
}(\gamma ^{\ast },x)\right) ,g_{-j}\left( \gamma _{-j}^{\ast },c_{-j}^{\ast
}(\gamma ^{\ast },x)\right) ,x\right) \right] <\underline{V}_{j}.
\label{inequality1}
\end{equation}%
Suppose that principal $j$ unilaterally deviates to $\gamma _{j}^{\prime
}:M_{j}\rightarrow A_{j}$ such that, with a slight abuse of notation, (i) $%
M_{0j}=\Pi _{j}^{B}$, (ii) $M_{ij}=X_{i}$ for all $i\in \mathcal{I}$, (iii)
for all $\pi _{j}\in M_{0j}$ and all $x_{ij}\in X_{i}$ for all $i\in 
\mathcal{I}$,%
\begin{equation}
\gamma _{j}^{\prime }\left( \pi _{j},x_{1j},\ldots ,x_{Ij}\right) =\pi
_{j}\left( x_{1j},\ldots ,x_{Ij}\right) .  \label{deviator_mechanism}
\end{equation}%
This is a general mechanism where principal $j$ chooses his direct mechanism
from $\Pi _{j}^{B}$ and agents send their type messages at the same time.%
\footnote{%
A precise formulation of such a mechanism $\gamma _{j}^{\prime
}:M_{j}\rightarrow A_{j}$ can be given as follows. $M_{ij}$ is homeomorphic
to $X_{i},$ there exists a homeomorphism $\xi _{ij}:X_{i}\rightarrow M_{ij}$
and $M_{0j}$ is homeomorphic to $\eta _{j}(\Pi _{j}^{B})$, that is, there
exists an homeomorphism $\varrho _{0j}:\eta _{j}(\Pi _{j}^{B})\rightarrow
M_{0j}$, where $\eta _{j}$ is an embedding $\eta _{j}:\Pi
_{j}^{B}\rightarrow \overline{\Gamma }_{j}$.}

Now I can construct continuation-equilibrium strategies of communicating
with principal $j$ that provides a higher utility for principal $j$ than his
PBE utility, \emph{given} continuation-equilibrium strategies $c_{-j}^{\ast
}(\gamma _{j}^{\prime },\gamma _{-j}^{\ast })$ of communicating with
non-deviating principals. For this, I first show that I can keep
continuation-equilibrium strategies $c_{-j}^{\ast }(\gamma _{j}^{\prime
},\gamma _{-j}^{\ast })$ of communicating with non-deviating principals
regardless of continuation-equilibrium strategies of communicating with
principal $j$.

Note that principal $j$'s deviation to $\gamma _{j}^{\prime }$ is not
observable by non-deviating principals so that they do not change their
messages to themselves, i.e., $c_{0k}(\gamma _{k}^{\ast })=c_{0k}^{\ast
}(\gamma _{k}^{\ast })$ for all $k\neq j.$ Now consider agents'
communication with every principal $k\neq j.$ Because $c^{\ast }(\gamma
_{j}^{\prime },\gamma _{-j}^{\ast })$ is a continuation equilibrium, I have
that for all $k\neq j,$ all $i\in \mathcal{I}$, a.e. $x_{i}\in X_{i}$ and
all $\mu _{ik}\in \Delta (M_{ik}),$ 
\begin{multline*}
\mathbb{E}_{x_{-i}}\left[ u_{ik}\left( g_{k}\left( \gamma _{k}^{\ast
},c_{0k}^{\ast }(\gamma _{k}^{\ast }),c_{1k}^{\ast }(\gamma _{j}^{\prime
},\gamma _{-j}^{\ast },x_{1}),\ldots ,c_{Ik}^{\ast }(\gamma _{j}^{\prime
},\gamma _{-j}^{\ast },x_{I})\right) ,x_{-i},x_{i}\right) |x_{i}\right] \geq
\\
\mathbb{E}_{x_{-i}}\left[ u_{ik}\left( g_{k}\left( \gamma _{k}^{\ast
},c_{0k}^{\ast }(\gamma _{k}^{\ast }),\mu _{ik},c_{-ik}^{\ast }(\gamma
_{j}^{\prime },\gamma _{-j}^{\ast },x_{-i})\right) ,x_{-i},x_{i}\right)
|x_{i}\right] .
\end{multline*}%
This implies that I can also keep each agent $i$'s strategy $c_{ik}^{\ast
}(\gamma _{j}^{\prime },\gamma _{-j}^{\ast },\cdot )$ of communicating with
each non-deviating principal $k$ for continuation-equilibrium strategies of
communicating with principal $j$ at $(\gamma _{j}^{\prime },\gamma
_{-j}^{\ast })$. Therefore, I fix 
\begin{equation}
c_{-j}^{\ast }(\gamma _{j}^{\prime },\gamma _{-j}^{\ast }).
\label{alternative_str1}
\end{equation}%
for continuation-equilibrium strategies of communicating with non-deviating
principals regardless of continuation-equilibrium strategies of
communicating with principal $j$.

Now consider how to construct continuation-equilibrium strategies of
communicating with principal $j$ that are best for principal $j$ upon his
deviation to $\gamma _{j}^{\prime }$. Note that $g_{k}(\gamma _{k}^{\ast
},c_{k}^{\ast }(\gamma _{j}^{\prime },\gamma _{-j}^{\ast }))$ is
individually BIC for all $k\neq j$ by Lemma \ref{lemma_eq_BIC}. Then, with a
slight abuse of notation, let principal $j$ choose a degenerate probability
distribution $c_{0j}(\gamma _{j}^{\prime })\in M_{0j}=\Pi _{j}^{B}$
satisfying 
\begin{equation}
c_{0j}(\gamma _{j}^{\prime })\in \underset{\pi _{j}\in \Pi _{j}^{B}}{\arg
\max }\;\mathbb{E}_{x}\left[ v_{j}\left( \pi _{j}(x),g_{-j}\left( \gamma
_{-j}^{\ast },c_{-j}^{\ast }(\gamma _{j}^{\prime },\gamma _{-j}^{\ast
},x)\right) ,x\right) \right] ,  \label{alternative_str2}
\end{equation}%
and let each agent choose a non-degenerate probability distribution that
chooses her true type with probability one and send it to principal $j$,
i.e., for all $i\in \mathcal{I}$ and a.e. $x_{i}\in X_{i}$%
\begin{equation}
c_{ij}(\gamma _{j}^{\prime },\gamma _{-j}^{\ast },x_{i})=x_{i}\text{.}
\label{alternative_str3}
\end{equation}%
Principal $j$ takes $g_{k}\left( \gamma _{k}^{\ast },c_{k}^{\ast }(\gamma
_{j}^{\prime },\gamma _{-j}^{\ast },\cdot )\right) $ for all $k\neq j$ as
given. Given agents' truthful type reporting to him, $c_{0j}(\gamma
_{j}^{\prime })$ satisfying (\ref{alternative_str2}) is clearly principal $j$%
's optimal choice of a direct mechanism in $M_{0j}=\Pi _{j}^{B}$. Given
principal $j$'s choice of a BIC direct mechanism $\pi _{j}=c_{0j}(\gamma
_{j}^{\prime })$, truthful type reporting (i.e., (\ref{alternative_str3}))
is also optimal for every agent $i$ of a.e. $x_{i}\in X_{i}$. Therefore, (%
\ref{alternative_str1}), (\ref{alternative_str2}) and (\ref{alternative_str3}%
) constitute a \emph{continuation equilibrium} at $(\gamma _{j}^{\prime
},\gamma _{-j}^{\ast })$.

Because $g_{-j}\left( \gamma _{-j}^{\ast },c_{-j}^{\ast }(\gamma
_{j}^{\prime },\gamma _{-j}^{\ast })\right) \in \Pi _{-j}^{B}$, principal $j$%
's utility in this continuation equilibrium upon deviation to $\gamma
_{j}^{\prime }$ is no less than $\underline{V}_{j}$: 
\begin{equation}
\underset{\pi _{j}\in \Pi _{j}^{B}}{\max }\mathbb{E}_{x}\left[ v_{j}\left(
\pi _{j}(x),g_{-j}\left( \gamma _{-j}^{\ast },c_{-j}^{\ast }(\gamma
_{j}^{\prime },\gamma _{-j}^{\ast },x)\right) ,x\right) \right] \geq 
\underline{V}_{j}  \label{inequality2}
\end{equation}%
(\ref{inequality1}) and (\ref{inequality2}) show that 
\begin{multline*}
\underset{\pi _{j}\in \Pi _{j}^{B}}{\max }\mathbb{E}_{x}\left[ v_{j}\left(
\pi _{j}(x),g_{-j}\left( \gamma _{-j}^{\ast },c_{-j}^{\ast }(\gamma
_{j}^{\prime },\gamma _{-j}^{\ast },x)\right) ,x\right) \right] \mathbb{>} \\
\mathbb{E}_{x}\left[ v_{j}\left( g_{j}\left( \gamma _{j}^{\ast },c_{j}^{\ast
}(\gamma ^{\ast },x)\right) ,g_{-j}\left( \gamma _{-j}^{\ast },c_{-j}^{\ast
}(\gamma ^{\ast },x)\right) ,x\right) \right]
\end{multline*}%
and hence principal $j$'s utility in this continuation equilibrium is
greater than his equilibrium utility. Because this continuation equilibrium
upon deviation to $\gamma _{j}^{\prime }$ is constructed with (\ref%
{alternative_str1}), (\ref{robust_eq_cond}) in the definition of a robust
PBE is violated. Therefore, any robust PBE allocation must be in the set on
the right hand side of (\ref{PBEallocations1})

To complete the proof, I need to show how to construct a robust PBE to
support an allocation in the set on the right hand side of (\ref%
{PBEallocations1}). Consider an allocation $\pi ^{\ast }=(\pi _{1}^{\ast
},\ldots ,\pi _{J}^{\ast })$ in the set on the right hand side of (\ref%
{PBEallocations1}). For each $k\in \mathcal{J}$, I consider the following 
\emph{deviator-reporting mechanism} $\gamma _{k}^{\bigstar }:M_{1k}\times
\cdots \times M_{Ik}\rightarrow \Delta (A_{k})$, where, with a slight abuse
of notation, $M_{ik}=\{1,\ldots ,J\}\times X_{i}$ (precisely, $M_{ik}$ is
homeomorphic to $\{1,\ldots ,J\}\times X_{i},$ that is there exists a
homeomorphism $\vartheta _{ik}:\{1,\ldots ,J\}\times X_{i}\rightarrow M_{ik}$%
). Note that a deviator-reporting mechanism is a standard mechanism in $%
\overline{\Gamma }_{k}$ where the principal does not send a message to
himself. In this deviator-reporting mechanism, each agent sends a message on
who the deviator is, if any, and her type. If more than a half of agents
report $j\neq k$ as the deviator, principal $k$ assigns a BIC direct
mechanism $\pi _{k}^{j}$ that punishes principal $k$. Otherwise, he assigns $%
\pi _{k}^{\ast }.$ I choose the following BIC direct mechanism for $\pi
_{k}^{j}$. For all $j\in \mathcal{J}$, then $\pi _{k}^{j}$ is principal $k$%
's BIC direct mechanism in $\pi _{-j}^{j}$ satisfying%
\begin{equation}
\pi _{-j}^{j}=(\pi _{k}^{j})_{k\neq j}\in \underset{\pi _{-j}\in \Pi
_{-j}^{B}}{\arg \min }\left[ \max_{\pi _{j}\in \Pi _{j}^{B}}\mathbb{E}_{x}%
\left[ v_{j}(\pi _{j}(x),\pi _{-j}(x),x)\right] \right] .  \label{arg_min}
\end{equation}%
(\ref{arg_min}) implies that 
\begin{equation}
\mathbb{E}_{x}\left[ v_{j}\left( \pi ^{\ast }(x),x\right) \right] \geq
\max_{\pi _{j}\in \Pi _{j}^{B}}\mathbb{E}_{x}\left[ v_{j}\left( \pi
_{j}(x),\pi _{-j}^{j}(x),x\right) \right] =\underline{V}_{j}
\label{eq_payoff}
\end{equation}%
because $\pi ^{\ast }=(\pi _{1}^{\ast },\ldots ,\pi _{J}^{\ast })$ in the
set on the right hand side of (\ref{PBEallocations1}).

Principal $k$'s deviator-reporting mechanism is constructed as follows. For
all $((\ell _{1k},$ $x_{1k}),\ldots ,(\ell _{Ik},x_{Ik}))\in \times _{i\in 
\mathcal{I}}M_{ik},$ 
\begin{equation}
\gamma _{k}^{\bigstar }((\ell _{1k},x_{1k}),\ldots ,(\ell
_{Ik},x_{Ik}))=\left\{ 
\begin{array}{cc}
\pi _{k}^{j}(x_{1k},\ldots ,x_{Ik}) & \text{if }\#\{\ell _{ik}=j\text{ for }%
j\neq k\}>I/2 \\ 
\pi _{k}^{\ast }(x_{1k},\ldots ,x_{Ik}) & \text{otherwise}%
\end{array}%
\right. .  \label{deviator_reporting_DM}
\end{equation}%
Every principal $k$ offers the mechanism $\gamma _{k}^{\bigstar }$ specified
in (\ref{deviator_reporting_DM}). Suppose that no principal deviates. Then,
each agent $i$ of a.e. $x_{i}\in X_{i}$ truthfully reports $x_{i}$ to every
principal $k$ along with the message $\ell _{ik}=k.$ Then $\gamma _{k}^{\ast
}$ implements $\pi _{k}^{\ast }(x_{1},\ldots ,x_{I}).$ Since there are three
or more agents, any single agent $i$'s deviation from $\ell _{ik}=k$ does
not prevent principal $k$ from implementing $\pi _{k}^{\ast }(x_{1},\ldots
,x_{I})$. Because $\pi _{k}^{\ast }$ is individually BIC, such communication
behavior constitutes a continuation equilibrium and each principal $k$
receives the utility of $\mathbb{E}_{x}\left[ V_{k}\left( \pi ^{\ast
}(x),x\right) \right] $.

Suppose that principal $j$ deviates to a general mechanism $\gamma
_{j}:M_{j}\rightarrow A_{j}.$ Then, each agent $i$ of a.e. $x_{i}\in X_{i}$
truthfully reports $x_{i}$ to every principal $k\neq j$ along with the
message $\ell _{ik}=j$. Then, $\gamma _{k}^{\bigstar }$ implements $\pi
_{k}^{j}(x_{1},\ldots ,x_{I})$. Because $\pi _{k}^{j}$ is individually BIC,
I can fix such communication behavior for a continuation equilibrium upon
principal $j$'s deviation to $\gamma _{j}.$ Let $\left( c_{0j}(\gamma
_{j}),c_{1j}(\gamma _{j},\gamma _{-j}^{\bigstar },\cdot ),\ldots
,c_{Ij}(\gamma _{j},\gamma _{-j}^{\bigstar },\cdot )\right) $ be arbitrary
continuation-equilibrium strategies of communicating with principal $j$ upon 
$j$'s deviation to $\gamma _{j}$. Then, it is clear that 
\begin{equation}
g_{j}\left( \gamma _{j},c_{0j}(\gamma _{j}),c_{1j}(\gamma _{j},\gamma
_{-j}^{\bigstar },\cdot ),\ldots ,c_{Ij}(\gamma _{j},\gamma _{-j}^{\bigstar
},\cdot )\right) \in \Pi _{j}^{B}.  \label{BIC_DM}
\end{equation}%
Because of (\ref{BIC_DM}), I have that 
\begin{equation}
\max_{\pi _{j}\in \Pi _{j}^{B}}\mathbb{E}_{x}\left[ v_{j}\left( \pi
_{j}(x),\pi _{-j}^{j}(x),x\right) \right] \geq \mathbb{E}_{x}\left[
v_{j}\left( g_{j}\left( \gamma _{j},c_{j}(\gamma _{j},\gamma _{-j}^{\bigstar
},x)\right) ,\pi _{-j}^{j}(x),x\right) \right]  \label{deviation_payoff}
\end{equation}%
where $c_{j}(\gamma _{j},\gamma _{-j}^{\bigstar },x)=(c_{0j}(\gamma
_{j}),c_{1j}(\gamma _{j},\gamma _{-j}^{\bigstar },x_{1}),\ldots
,c_{Ij}(\gamma _{j},\gamma _{-j}^{\bigstar },x_{I})).$ Note that the
expression on the right hand side of (\ref{deviation_payoff}) is principal $%
j $'s utility in a continuation equilibrium upon deviation to $\gamma _{j}.$
Combining (\ref{eq_payoff}) and (\ref{deviation_payoff}) yields 
\begin{equation}
\mathbb{E}_{x}\left[ v_{j}\left( \pi ^{\ast }(x),x\right) \right] \geq 
\mathbb{E}_{x}\left[ v_{j}\left( g_{j}\left( \gamma _{j},c_{j}(\gamma
_{j},\gamma _{-j}^{\bigstar },x)\right) ,\pi _{-j}^{j}(x),x\right) \right] ,
\label{robust_PBE_payoff}
\end{equation}%
which shows that principal $j$ cannot gain upon deviation to $\gamma _{j}$. (%
\ref{robust_PBE_payoff}) implies that $\pi ^{\ast }=(\pi _{1}^{\ast },\ldots
,\pi _{J}^{\ast })$ is supported as a robust PBE allocation because the
first argument $g_{j}\left( \gamma _{j},c_{j}(\gamma _{j},\gamma
_{-j}^{\bigstar },x)\right) $ on the right-hand side of (\ref%
{robust_PBE_payoff}) is derived with any continuation-equilibrium strategies
of communicating with principal $j$, $\left( c_{0j}(\gamma
_{j}),c_{1j}(\gamma _{j},\gamma _{-j}^{\bigstar },\cdot ),\ldots
,c_{Ij}(\gamma _{j},\gamma _{-j}^{\bigstar },\cdot )\right) $, given agents'
truth telling to non-deviating principals.
\end{proof}

\bigskip

The proof of Theorem \ref{theorem1} first establishes that a principal's
utility in a robust PBE $(\gamma ^{\ast },c^{\ast })$ cannot be lower than
his minmax value over BIC direct mechanisms. Given continuation-equilibrium
strategies $c_{-j}^{\ast }$ of communicating with non-deviating principals,
principal $j$ can deviate to a general mechanism $\gamma _{j}^{\prime }$
where he can choose any BIC direct mechanism from $\Pi _{j}^{B}$ and agents
send their type messages to principal $j$ at the same time. Given the
robustness of PBE, principal $j$ can implement the BIC direct mechanism from 
$\Pi _{j}^{B}$ that maximizes his utility, taking $g_{-j}\left( \gamma
_{-j}^{\ast },c_{-j}^{\ast }(\gamma _{j}^{\prime },\gamma _{-j}^{\ast
})\right) \in \Pi _{-j}^{B}$ as given. This directly implies that principal $%
j$'s robust PBE utility cannot be lower than his minmax value over BIC
direct mechanisms.

The proof of Theorem \ref{theorem1} is completed by constructing the
deviator-reporting mechanism to show how to support any profile of BIC
direct mechanisms that provides every principal a utility no less than his
minmax value over BIC direct mechanisms as a robust PBE allocation.

The existence of a robust PBE with deviator-reporting mechanisms can be
addressed. First consider a continuation equilibrium upon principal $j$'s
deviation to a general mechanism $\gamma _{j}$. If principal $j$ deviates,
the truth-telling continuation equilibrium that implements $\pi _{-j}^{j}$
is also well established and exists when it comes to communicating with
non-deviating principals.

One may wonder if there is a (mixed-strategy) continuation equilibrium 
\begin{equation*}
\left( c_{0j}(\gamma _{j}),c_{1j}(\gamma _{j},\gamma _{-j}^{\bigstar },\cdot
),\ldots ,c_{Ij}(\gamma _{j},\gamma _{-j}^{\bigstar },\cdot )\right)
\end{equation*}%
exists for communicating with deviating principal $j$ upon his deviation to
a general mechanism $\gamma _{j}$. When agents and principal $j$ send
messages to principal $j$ given his mechanism $\gamma _{j}$, they take $\pi
_{-j}^{j}$ as given. For all $(m_{0j},m_{1j},\ldots ,m_{Ij})\in M_{j}$ and
all $x\in X$, principal $j$'s utility for the subgame defined by $\gamma
_{j} $ follows 
\begin{equation}
\mathcal{V}_{j}(m_{0j},m_{1j},\ldots ,m_{Ij},x):=v_{j}\left( \gamma
_{j}(m_{0j},m_{1j},\ldots ,m_{Ij}),\pi _{-j}^{j}(x),x\right) .
\label{prinicpal_payoff2}
\end{equation}%
Because the agent's utility function is additively separable with respect to
principals' actions, the agent's relevant utility for the subgame defined by 
$\gamma _{j}$ is 
\begin{equation}
\mathcal{U}_{ij}(m_{0j},m_{1j},\ldots ,m_{Ij},x):=u_{ij}(\gamma
_{j}(m_{0j},m_{1j},\ldots ,m_{Ij}),x).  \label{agent_payoff2}
\end{equation}%
If $\mathcal{V}_{j}$ and $\mathcal{U}_{ij}$ and the information structure
satisfy the sufficient conditions for the existence of a Bayesian
equilibrium with continuous actions identified in the literature (Balder
(1988), Hellman and Levy (2017), Milgrom and Weber (1985), etc.), then a
continuation equilibrium for the subgame defined by $\gamma _{j}$ exists.%
\footnote{%
For example, if $\mathcal{V}_{j}$ and $\mathcal{U}_{ij}$ are uniformly
continuous and the information structure is absolutely continuous (Milgrom
and Weber (1985)), then a (mixed-strategy) continuation equilibrium exists.
However, the existence of a (mixed-strategy) continuation equilibrium for
communicating with principal $j$ upon his deviation should come naturally.
When principal $j$ deviates, he will certainly deviate to a mechanism that
guarantees the existence of a (mixed-strategy) continuation equilibrium for
communicating with him.}

Because any (mixed) continuation-equilibrium strategies%
\begin{equation*}
\left( c_{0j}(\gamma _{j}),c_{1j}(\gamma _{j},\gamma _{-j}^{\bigstar },\cdot
),\ldots ,c_{Ij}(\gamma _{j},\gamma _{-j}^{\bigstar },\cdot )\right)
\end{equation*}%
upon deviation to any mechanism $\gamma _{j}$ induces a BIC direct mechanism 
\begin{equation}
g_{j}\left( \gamma _{j},c_{0j}(\gamma _{j}),c_{1j}(\gamma _{j},\gamma
_{-j}^{\bigstar },\cdot ),\ldots ,c_{Ij}(\gamma _{j},\gamma _{-j}^{\bigstar
},\cdot )\right) \in \Pi _{j}^{B},  \label{deviation_BIC}
\end{equation}%
the notion of robustness makes it possible for principal $j$ to focus on any
BIC direct mechanism in $\Pi _{j}^{B}$ for his deviation.

Finally, note that a deviator-reporting mechanism $\gamma _{k}^{\bigstar }$
can be viewed as one of mechanisms in $\Gamma _{k}$ for all $k$ and each
principal $j$ receives a utility no higher than $\underline{V}_{j}$ upon
deviation to any $\gamma _{j}\in \Gamma _{j}$ in a continuation equilibrium
where agents report the identity of a deviator and their types truthfully to
non-deviating principals whose mechanisms are deviator-reporting mechanisms.
Given the structure of a deviator-reporting mechanism and (\ref%
{deviation_BIC}), this implies that a robust PBE of a general competing
mechanism game with $\Gamma $ exists if (i) $\Pi _{RB}^{\ast }=\left\{ \pi
\in \Pi ^{B}:\mathbb{E}_{x}\left[ v_{j}(\pi (x),x)\right] \geq \underline{V}%
_{j}\text{ for all }j\in \mathcal{J}\right\} $ is non-empty\footnote{$\Pi
_{R}^{\ast }$ is non-empty because $\Pi _{B}^{\ast }\subset \Pi _{R}^{\ast }$
and $\Pi _{B}^{\ast }$ is non-empty because of Corollary \ref%
{corollary_existence}.} and (ii) $\mathcal{V}_{j}$ and $\mathcal{U}_{ij}$
defined for any given $\gamma _{j}$ ((\ref{prinicpal_payoff2}) and (\ref%
{agent_payoff2})), and the information structure satisfy the sufficient
conditions for the existence of a Bayesian equilibrium with continuous
actions identified in the literature (Balder (1988), Hellman and Levy
(2017), Milgrom and Weber (1985), etc.).

Note that it is clear that $\Pi _{B}^{\ast }\subset \Pi _{RB}^{\ast }$.
Applying (\ref{UR4}), we have that 
\begin{equation*}
\Pi _{UB}^{\ast }\subset \Pi _{B}^{\ast }\subset \Pi _{RB}^{\ast }.
\end{equation*}%
All three sets are independent of the set of general mechanisms allowed in
competing mechanism games and they are all specified in terms of BIC direct
mechanisms. My paper identifies the notions of PBEs that produce those sets
of allocations as the sets of equilibrium allocations in general competing
mechnaism games respectively.

\section{Discussion\label{sec:discussion}}

This paper proposes a general competing mechanism game of incomplete
information where a mechanism allows its designer to send a message to
himself at the same time agents send messages. This paper introduces various
notions of robust perfect Bayesian equilibrium (PBE). Given the additive
separability of each agent's utility function with respect to principals'
actions, this paper provides the full characterization of equilibrium
allocations in terms of Bayesian incentive compatible (BIC) direct
mechanisms for each notion of robust PBE, without reference to the set of
arbitrary mechanisms allowed in the game.

\subsection{Robust PBE in standard competing mechanism games}

Consider the standard competing mechanism games where each principal $j$
offer a standard mechanism in $\overline{\Gamma }_{j}$ where agents'
messages to $j$ fully determines $j$'s action. In the standard competing
mechanism game, the adoption of robust PBE as the solution concept does not
lead to the full characterization of equilibrium allocation in terms of BIC
direct mechanisms even with the separable utility functions for the agents.

To see this point, fix a robust PBE $(\overline{\gamma },\overline{c})$ in
the standard competing mechanism game. Suppose that principal $j$ deviates
to a standard mechanism $\gamma _{j}^{\prime }\in \overline{\Gamma }_{j}.$
Agents' continuation equilibrium strategies of communicating with principal $%
j$, $\bar{c}_{j}(\gamma _{j}^{\prime },\bar{\gamma}_{-j})$ induce a BIC
direct mechanism $g_{j}\left( \gamma _{j}^{\prime },\bar{c}_{j}(\gamma
_{j}^{\prime },\bar{\gamma}_{-j})\right) \in \Pi _{j}^{B}$, whereas agents'
continuation equilibrium strategies of communicating with principals except
for $j$, $\bar{c}_{j}(\gamma _{j}^{\prime },\bar{\gamma}_{-j})$ induce a
profile of BIC direct mechanisms $g_{-j}\left( \bar{\gamma}_{-j},\bar{c}%
_{j}(\gamma _{j}^{\prime },\bar{\gamma}_{-j})\right) \in \Pi _{-j}^{B}.$
Without knowing the structure of $\gamma _{j}^{\prime }$, we do not know the
set of all BIC direct mechanisms that can be induced by all possible
continuation equilibrium strategies of communicating with principal $j$, $%
c_{j}(\gamma _{j}^{\prime },\bar{\gamma}_{-j})$ even if agents' utility
functions are separable. This makes it not possible to provide the full
characterization of robust PBE allocations in terms of BIC direct mechanisms
even with the separable utility functions for agents.

Such a diffiulty does not prevent the full characterization of robust PBE
allocations in general competing mechanism games. Principal $j$ can alway
offer a general mechanism in which the message space for himself is the set
of all possible BIC direct mechanisms $\Pi _{j}^{B}$ and the message space
for each agent is her type space. Given the separable utility functions for
agents, the set of all possible BIC direct mechanisms for $j$ induced by all
continuation-equilibrium strategies of communicating with $j$ is then equal
to $\Pi _{j}^{B}$. This makes that principal $j$'s robust PBE utility cannot
be lower than 
\begin{equation*}
\underline{V}_{j}=\min_{\pi _{-j}\in \Pi _{-j}^{B}}\max_{\pi _{j}\in \Pi
_{j}^{B}}\mathbb{E}_{x}\left[ v_{j}(\pi _{j}(x),\pi _{-j}(x),x)\right] .
\end{equation*}%
I can then show that in the general competing mechanism game, any profile of
BIC direct mechanism that provides each principal $j$ with a utility no less
than $\underline{V}_{j}$ can be supported in a robust PBE where each
principal $j$ offers a deviator-reporting mechanism in which agents are
asked to report the identity of a deviating principal, if any, along with
her type.

\subsection{Mixed strategies for principals}

I have not considered mixed strategies for principals. This can be done
easily. Consider the general competing mechanism game. When principal $j$
deviates given that the other principals employ mixed strategies for their
mechanism choice, he can deviate to a general mechanism where he asks agents
not only about their types but also the BIC direct mechanisms $\pi _{-j}$
induced from the other principals' mechanisms. Principal $j$'s message in
this general mechanism is a mapping from $\Pi _{-j}^{B}$ into $\Pi _{j}^{B},$
which specifies his choice of a BIC direct mechanism as a function of
agents' reports on the other principals' BIC direct mechanisms. In this way,
even in a mixed-strategy robust PBE, a deviating principal can always
implement a BIC direct mechanism that maximizes his utility conditional on
BIC direct mechanisms induced by the other principals' mechanisms.
Therefore, the greatest lower bound of principal $j$'s mixed-strategy robust
PBE utility is the same as the greatest lower bound of his pure-strategy
robust PBE utility, that is, his minmax value $\underline{V}_{j}$ over BIC
direct mechanisms. For all $j$, let 
\begin{equation*}
\Phi _{j}(\sigma _{1},\ldots ,\sigma _{J}):=\int_{\Pi _{1}^{B}}\cdots
\int_{\Pi _{J}^{B}}\mathbb{E}_{x}\left[ v_{j}(\pi _{1}(x),\ldots \pi
_{J}(x),x)\right] d\sigma _{1}\cdots d\sigma _{J},
\end{equation*}%
where $(\sigma _{1},\ldots ,\sigma _{J})\in \Delta (\Pi _{1}^{B})\times
\cdots \times \Delta (\Pi _{J}^{B})$. Then, the set of mixed-strategy robust
PBE allocations of a general competing mechanism game with $\Gamma $ is%
\begin{equation*}
\left\{ (\sigma _{1},\ldots ,\sigma _{J})\in \Delta (\Pi _{1}^{B})\times
\cdots \times \Delta (\Pi _{J}^{B}):\Phi _{j}(\sigma _{1},\ldots ,\sigma
_{J})\geq \underline{V}_{j}\text{ for all }j\in \mathcal{J}\right\} .
\end{equation*}


\begin{thebibliography}{99}
\bibitem{} Aliprantis, C. D., and K. C. Border (1999), \emph{Infinite
Dimensional Analysis: A Hitchhiker's Guide}, 2nd Edition, Springer.

\bibitem{} Attar, A., Majumdar, D., Oiaser, G., and N. Porteiro (2008),
\textquotedblleft Common Agency Games: Indifference and Separable
Preferences,\textquotedblright\ \emph{Mathematical Social Sciences}, 56,
75-95

\bibitem{} Balder, E. J. (1988), \textquotedblleft Generalized Equilibrium
Results for Games with Incomplete Information,\textquotedblright\ \emph{%
Mathematics of Operations Research}, 13(2), 265-276.

\bibitem{} Bester, H. and R. Strausz (2000), \textquotedblleft Imperfect
Commitment an the Revelation Principle: the Multi-agent
Case,\textquotedblright\ \emph{Economics Letters}, 69(2), 165-171.

\bibitem{} ------------------------------ (2007), \textquotedblleft
Contracting with Imperfect Commitment and Noisy
Communication,\textquotedblright\ \emph{Journal of Economic Theory}, 136(1),
236-259.

\bibitem{} Dorval, L. and V. Skreta (2021), \textquotedblleft Mechanism
Design with Limited Commitment,\textquotedblright\ Papers 1811.03579,
arXiv.org.

\bibitem{} Epstein, L., and M. Peters (1999), \textquotedblleft A Revelation
Principle for Competing Mechanisms,\textquotedblright\ \emph{Journal of
Economic Theory}, 88 (1), 119--160.

\bibitem{} Fudenberg, D. and J. Tirole (1991), \textquotedblleft Game
Theory,\textquotedblright\ Cambridge, MA: The MIT Press.

\bibitem{} Gershkov, A, J. K. Goeree, A. Kushnir, B. Moldovanu and X. Shi
(2013), \textquotedblleft On the Equivalence of Bayesian and Dominant
Strategy Implementation,\textquotedblright\ \emph{Econometrica}, 81(1),
197-220.

\bibitem{} Glicksberg, I. L. (1952), \textquotedblleft A Further
Generalization of the Kakutani Fixed Point Theorem with Application to Nash
Equilibrium Points,\textquotedblright\ \emph{Proceedings of the National
Academy of Sciences} 38, 170-174.

\bibitem{} Han, S. (2006), \textquotedblleft Menu Theorems for Bilateral
Contracting,\textquotedblright\ \emph{Journal of Economic Theory}, 131(1),
157-178.

\bibitem{} --------- (2007), \textquotedblleft Strongly Robust Equilibrium
and Competing-Mechanism Games,\textquotedblright\ \emph{Journal of Economic
Theory}, 137, 610-626.

\bibitem{} --------- (2012), \textquotedblleft On Take It or Leave It Offers
in Common Agency,\textquotedblright\ \emph{Economics Letters}, 117, 777-781.

\bibitem{} --------- (2022), \textquotedblleft General Competing Mechanism
Games with Strategy-Proof Punishment,\textquotedblright\ \emph{Journal of
Mathematical Economics}, 102, 102728.

\bibitem{} Hanche-Olsen, H. and H. Holden (2010), \textquotedblleft The
Kolmogorov-Reisz Compactness Theorem,\textquotedblright\ \emph{Expositiones
Mathematicae}, 28, 385-394.

\bibitem{} Hellman, Z. and Y. J. Levy (2017), \textquotedblleft Bayesian
Games with a Continuum of States,\textquotedblright\ \emph{Theoretical
Economics}, 12(3), 1089-1120.

\bibitem{} Kolmogorov, A. N., \& Fomin, S. V. (1975), \emph{Introductory
real analysis}, Courier Corporation

\bibitem{} Martimort, D. and L. Stole (2002), \textquotedblleft The
Revelation and Delegation Principles in Common Agency
Games,\textquotedblright\ \emph{Econometrica} 70(4), 1659-73.

\bibitem{} McAfee, P. (1993), \textquotedblleft Mechanism Design by
Competing Sellers,\textquotedblright\ \emph{Econometrica}, 61(6),1281-1312

\bibitem{} Milgrom, P. R., and R. J. Weber (1985), \textquotedblleft
Distributional Strategies for Games with Incomplete
Information,\textquotedblright\ \emph{Mathematics of Operations Research},
10(4), 619-632.

\bibitem{} Page, F. H., and P. K. Monteiro (2003), \textquotedblleft Three
Principles of Competitive Nonlinear Pricing,\textquotedblright\ \emph{%
Journal of Mathematical Economics}, 39(1-2), 63--109.

\bibitem{} Palfrey, T. R. and S. Srivastava (2018), \emph{Bayesian
Implementation}, Fundamentals of Pure and Applied Economics 53, Routledge

\bibitem{} Pavan, A. and G. Calozlari (2010), \textquotedblleft Truthful
Revelation Mechanisms for Simultaneous Common Agency
Games,\textquotedblright\ \emph{American Economic Journal: Microeconomics}
2(2), 132-190.

\bibitem{} Peters, M. (2001), \textquotedblleft Common Agency and the
Revelation Principle,\textquotedblright\ \emph{Econometrica} 69(5), 1349-1372

\bibitem{} Peters, M. (2003), \textquotedblleft Negotiation and
Take-it-or-leave-it in Common Agency,\textquotedblright\ \emph{Journal of
Economic Theory}, 111(1), 88-109

\bibitem{} ------------ (2007), \textquotedblleft Errata - Negotiation and
Take-it-or-leave-it in Common Agency,\textquotedblright\ \emph{Journal of
Economic Theory}, 135(1), 594-595

\bibitem{} Peters, M., and B. Szentes (2012), \textquotedblleft Definable
and Contractible Contracts,\textquotedblright\ \emph{Econometrica}, 80(1),
363-411

\bibitem{} Sion, M. (1958), \textquotedblleft On General Minimax
Theorems,\textquotedblright\ \emph{Pacific Journal of Mathematics}, 8(1),
171-176.

\bibitem{} Szentes, B. (2009), \textquotedblleft A note on `Mechanism games
with multiple sellers and three or more buyers' by T.
Yamashita,\textquotedblright\ University College London, unpublished
manuscript.

\bibitem{} Szentes, B. (2015), \textquotedblleft Contractible Contracts in
Common Agency Problems,\textquotedblright\ \emph{Review of Economic Studies}%
, 82(1), 391-422

\bibitem{} Yamashita, T. (2010), \textquotedblleft Mechanism games with
multiple sellers and three or more buyers,\textquotedblright\ \emph{%
Econometrica}, 78(2), 791-801.

\bibitem{} Xiong, S. (2013), \textquotedblleft A Folk Theorem for Contract
Games with Multiple Principals and Agents,\textquotedblright\ working paper,
Rice University
\end{thebibliography}
\end{document}